\newtheorem{theorem}{Theorem}[section]
\newtheorem{lemma}[theorem]{Lemma}
\newtheorem{proposition}[theorem]{Proposition}
\newtheorem{corollary}[theorem]{Corollary}
\newtheorem{Definition}[theorem]{Definition}
\newtheorem{Example}[theorem]{Example}
\newtheorem{conjecture}[theorem]{Conjecture}
\newenvironment{example}{\begin{Example}}%
        {\hfill$\Box$\smallskip\end{Example}}
\smallskip\end{trivlist}\vspace*{-.6cm}}
\def\Maple{\textsc{Maple}}
 \def\NN{{\mathbb N}} \def\PP{{\mathbb P}}
  \def\ZZ{{\mathbb Z}}
\def\ds{\displaystyle}  \def\bm{\boldsymbol}
\def\id{\text{Id}}
\begin{document}

\begin{frontmatter}
\title{Multihomogeneous Resultant Formulae for Systems with Scaled Support}

\author{Ioannis Z. Emiris}
 \address{
Department of Informatics and Telecommunications\\ 
National and Kapodistrian University of Athens, Greece}
\ead{lastname@di.uoa.gr}

\author{Angelos Mantzaflaris}
 \address{
GALAAD, INRIA M\'editerran\'ee\\
BP 93, 06902 Sophia Antipolis, France}
\ead{FirstName.LastName@inria.fr}

\begin{abstract}
  Constructive methods for matrices of multihomogeneous (or
  multigraded) resultants for unmixed systems have been studied by
  Weyman, Zelevinsky, Sturmfels, Dickenstein and Emiris.  We
  generalize these constructions to {\em mixed} systems, whose Newton
  polytopes are scaled copies of one polytope, thus taking a step
  towards systems with arbitrary supports.  First, we specify matrices
  whose determinant equals the resultant and characterize the systems
  that admit such formulae.  B\'ezout-type determinantal formulae do
  not exist, but we describe all possible Sylvester-type and hybrid
  formulae.  We establish tight bounds for all corresponding degree
  vectors, and specify domains that will surely contain such vectors;
  the latter are new even for the unmixed case.  Second, we make use
  of multiplication tables and strong duality theory to specify
  resultant matrices explicitly, for a general scaled system, thus
  including unmixed systems.  The encountered matrices are classified;
  these include a new type of Sylvester-type matrix as well as
  B\'ezout-type matrices, known as partial Bezoutians.  Our
  public-domain \Maple\ implementation includes efficient storage of
  complexes in memory, and construction of resultant matrices.
\end{abstract}

 \begin{keyword}
multihomogeneous system, resultant matrix, Sylvester, B\'ezout,
determinantal formula, \Maple\ implementation
  \end{keyword}
\end{frontmatter}

\newpage
\section{Introduction}

Resultants provide efficient ways for studying and solving polynomial
systems by means of their matrices. They are most efficiently expressed by a
generically nonsingular matrix, whose determinant is a multiple of
the resultant, so that the determinant degree with respect to the coefficients
of one polynomial equals that of the resultant.
For two univariate polynomials there are matrix formulae
named after Sylvester and B\'ezout, whose determinant equals the resultant;
we refer to them as detererminantal formulae.
Unfortunately, such determinantal formulae do not generally exist for more
variables, except for specific cases; this is the topic of our paper.

We consider the sparse (or toric) resultant, which exploits {\em a priori}
knowledge on the support of the equations.
Matrix formulae have been studied for systems where the
variables can be partitioned into groups so that every polynomial is
homogeneous in each group, i.e.\ {\em mixed multihomogeneous}, or
multigraded, systems. This study is an intermediate stage from the
theory of homogeneous and unmixed multihomogeneous systems, towards
fully exploiting arbitrary sparse structure.
Multihomogeneous systems are encountered in several areas, e.g.\
\cite{ChGoZh98,Mckelvey_totallymixed,EGL04}.
Few foundational works exist, such as
\cite{SCD07}, where bigraded systems are analyzed.
Our work continues that of
\cite{DE03,SZ94,WeZe}, where the unmixed case has been treated, and
generalizes their results to systems whose Newton polytopes
are scaled copies of one polytope. These are known as generalized unmixed
systems, and allow us to take a step towards systems
with arbitrary supports.  This is the first work that treats {\em
  mixed} multihomogeneous equations, and provides explicit resultant
matrices.

Sparse resultant matrices are of different types.  On the one end of
the spectrum are the {\em pure Sylvester-type} matrices, filled in by
polynomial coefficients; such are Sylvester's and Macaulay's matrices.
On the other end are the {\em pure B\'ezout-type} matrices, filled in
by coefficients of the {\em Bezoutian} polynomial. Hybrid matrices
contain blocks of both pure types.

We examine Weyman complexes (defined below), which generalize the
Cayley-Koszul complex and yield the multihomogeneous resultant as the
determinant of a complex.  These complexes are parameterized by a
\emph{degree vector} $\bm m $.  When the complex has two terms, its
determinant is that of a matrix expressing the map between these
terms, and equals the resultant.  In this case, there is a {\it
  determinantal\/} formula, and the corresponding vector $\bm m$ is
{\it determinantal\/}. The resultant matrix is then said to be
\emph{exact}, or {\em optimal}, in the sense that there is no
extraneous factor in the determinant.  As is typical in all such
approaches, including this paper, the polynomial coefficients are
assumed to be sufficiently generic for the resultant, as well as any
extraneous factor, to be nonzero.

In~\cite{WeZe}, the unmixed multihomogeneous systems for which a
determinantal formula exists were classified, but no formula was given;
see also~\cite[Sect.13.2]{GKZ}. Identifying explicitly the corresponding
morphisms and the vectors $\bm m$ was the focus of~\cite{DE03}.  The
main result of~\cite{SZ94} was to establish
that a determinantal formula of Sylvester type exists (for unmixed
systems) precisely when the condition of~\cite{WeZe} holds on the cardinalities
of the groups of variables and their degrees.  In \cite[Thm.2]{SZ94}
all such formulae are characterized by showing a bijection with the
permutations of the variable groups and by defining the corresponding
vector $\bm m$.  This includes all known Sylvester-type formulae, in
particular, of linear systems, systems of two univariate polynomials,
and bihomogeneous systems of 3 polynomials whose resultant is,
respectively, the coefficient determinant, the Sylvester resultant and
the classic Dixon formula.

In \cite{SZ94}, they characterized all determinantal Cay\-ley-Kos\-zul
complexes, which are instances of \emph{Weyman complexes} when all the
higher cohomologies vanish.  In \cite{DE03}, this characterization is
extended to the whole class of unmixed Weyman complexes.  It is also
shown that there exists a determinantal pure B\'ezout-type resultant
formula if and only if there exists such a Sylvester-type formula.  Explicit
choices of determinantal vectors are given for any matrix type, as
well as a choice yielding pure B\'ezout type formulae, if one exists.
The same work provides tight bounds for the coordinates of all
possible determinantal vectors and, furthermore, constructs a family of
(rectangular) pure Sylvester-type formulae among which lies the
smallest such formula.  This paper shall extend these results to
unmixed systems with scaled supports.

Studies exist, e.g.~\cite{ChGoZh98}, for computing hybrid formulae for
the resultant in specific cases.  In~\cite{AwaChkGoz05}, the Koszul
and Cech cohomologies are studied in the mixed multihomogeneous case
so as to define the resultant in an analogous way to the one used in
Section~\ref{resultants}.  In \cite{DADi01}, hybrid resultant formulae
were proposed in the mixed homogeneous case; this work is generalized
here to multihomogeneous systems.  Similar approaches are applied to
Tate complexes~\cite{DM08} to handle mixed systems.

The main contributions of this paper are as follows: Firstly, we
establish the analog of the bounds given in \cite[Sect.3]{DE03}; in
so doing, we simplify their proof in the unmixed case.  We
characterize the scaled systems that admit a determinantal formula,
either pure or hybrid.  If pure determinantal formulae exist, we
explicitly provide the $\bm m$-vectors that correspond to them.  In
the search for determinantal formulae we discover box domains that
consist of determinantal vectors thus improving the wide search for
these vectors adopted in~\cite{DE03}. We conjecture that a formula of
minimum dimension can be recovered from the centers of such boxes,
analogous to the homogeneous case.  

Second, we make the differentials in the Weyman complex explicit and
provide details of the computation.  Note that the actual construction
of the matrix, given the terms of the complex, is nontrivial.  Our
study has been motivated by \cite{DE03}, where similar ideas are used
in the (unmixed) examples of their Section~7, with some constructions
which we specify in Example~\ref{unmixedExamSylv}.  Finally, we
deliver a complete, publicly available \Maple\ package for the
computation of multihomogeneous resultant matrices.  Based on the
software of \cite{DE03}, it has been enhanced with new functions,
including some even for the unmixed case, such as the construction of
resultant matrices and the efficient storage of complexes.

The rest of the paper is organized as follows.  We start with sparse
multihomogeneous resultants and Weyman complexes in
Section~\ref{resultants} below.  Section~\ref{determinantal} presents
bounds on the coordinates of all determinantal vectors and classifies
the systems that admit hybrid and pure determinantal formulae;
explicit vectors are provided for pure formulae and minimum dimension
choices are conjectured.  In Section~\ref{MatConstr} we construct the
actual matrices; we present Sylvester- and B\'ezout-type constructions
that also lead to hybrid matrices. We conclude with the presentation
of our \Maple\ implementation along with examples of its usage.

Some of these results have appeared in preliminary form in~\cite{EmiMan09}.

\section{Resultants via complexes} \label{resultants}

We define the resultant, and connect it to complexes by homological
constructions.  Take the product $X:=\PP^{l_1} \times \cdots \times
\PP^{l_r}$ of projective spaces over an algebraically closed field
$\mathbb F$ of characteristic zero, for $r\in\NN$.  Its dimension
equals the number of affine variables $n=\sum_{k=1}^r l_k$.  We
consider polynomials over $X$ of scaled degree: their multidegree is a
multiple of a base degree $\bm d=(d_1,\ldots,d_r)\in\NN^r$, say $\deg
f_i = s_i\bm d$.  We assume $s_0\le \cdots \le s_n$ and
$\gcd(s_0,\dots,s_n)=1$, so that the data $\bm l,\bm d,\bm
s=(s_0,\dots,s_n)\in\NN^{n+1}$ fully characterize the system.  We
denote by $S(\bm d)$ the vector space of multihomogeneous forms of
degree $\bm d$ defined over $X$. These are homogeneous of degree $d_k$ in
the variables $\bm x_k$ for $k=1,\ldots,r$.
By a slight abuse of notation, we also write $S(d_k) \subset
\PP^{l_k}$ for the subspace of homogeneous polynomials in $l_k$
variables, of degree $d_k$.
A system of type $(\bm l,\bm d,\bm s)$ belongs to $V= S(s_0\bm
d)\oplus\cdots\oplus S(s_n\bm d)$.

\begin{Definition}
  Consider a generic scaled multihomogeneous system $\bm
  f=(f_0,\dots,f_n)$ defined by the cardinalities $\bm l\in\NN^r$,
  base degree $\bm d\in\NN^r$ and $\bm s\in\NN^{n+1}$.  The
  \emph{multihomogeneous} resultant $\mathcal R(f_0,\ldots,f_n)
  =\mathcal R_{\bm l,\bm d,\bm s}(f_0,\ldots,f_n)$ is the unique up to
  sign, irreducible polynomial of $\ZZ[V]$, which vanishes if and only if there
  exists a common root of $f_0,\dots,f_n$ in $X$.
\end{Definition}
This polynomial exists for any data $\bm l,\,\bm d,\,\bm s$, since it is an
instance of the sparse resultant.  It is itself multihomogeneous in
the coefficients of each $f_i$, with degree given by the
multihomogeneous B\'ezout bound:

\begin{lemma}\label{mBezout} The resultant polynomial is homogeneous
in the coefficients of each $f_i,\ i= 0,\dots,n$, with degree
\begin{equation*} \deg_{f_i}\mathcal R = \binom{n}{l_1,\dots,l_r}
\frac{ d_1^{l_1}\cdots d_r^{l_r}s_0\cdots s_n}{s_i}.
\end{equation*}
\end{lemma}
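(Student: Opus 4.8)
The degree of the resultant in the coefficients of $f_i$ is a classical invariant that can be computed as a mixed volume, or equivalently, via an incidence-variety argument. Let me plan the approach.

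First, recall the general principle: for the sparse (toric) resultant of a system $f_0, \dots, f_n$ with supports $A_0, \dots, A_n$, the degree $\deg_{f_i} \mathcal{R}$ equals the mixed volume $\mathrm{MV}(A_0, \dots, \widehat{A_i}, \dots, A_n)$ of the remaining $n$ Newton polytopes, suitably normalized. In our setting, $f_j$ has Newton polytope $s_j P$, where $P$ is the polytope of a form of degree $\bm d$ on $X = \PP^{l_1} \times \cdots \times \PP^{l_r}$, namely the product of scaled simplices $d_1 \Delta_{l_1} \times \cdots \times d_r \Delta_{l_r}$. So I need $\mathrm{MV}(s_0 P, \dots, \widehat{s_i P}, \dots, s_n P)$.

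Next, I would use multilinearity and the homogeneity of mixed volume under scaling: $\mathrm{MV}(s_0 P, \dots, \widehat{s_i P}, \dots, s_n P) = \left(\prod_{j \neq i} s_j\right) \mathrm{MV}(P, \dots, P) = \frac{s_0 \cdots s_n}{s_i} \cdot n! \cdot \mathrm{vol}(P)$, where the normalized mixed volume of $P$ taken $n$ times is $n!$ times its Euclidean volume. Since $P = d_1\Delta_{l_1} \times \cdots \times d_r \Delta_{l_r}$ is a product, its volume factors as $\prod_k \mathrm{vol}(d_k \Delta_{l_k}) = \prod_k \frac{d_k^{l_k}}{l_k!}$, so $n!\,\mathrm{vol}(P) = \frac{n!}{l_1! \cdots l_r!} d_1^{l_1} \cdots d_r^{l_r} = \binom{n}{l_1,\dots,l_r} d_1^{l_1} \cdots d_r^{l_r}$. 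Combining gives exactly the claimed formula.

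Alternatively — and this is probably the cleaner route to present — I would invoke the classical multihomogeneous B\'ezout theorem directly: a generic system of $n$ forms on $X$ of multidegrees $e_1 \bm d, \dots, e_n \bm d$ has exactly $\binom{n}{l_1,\dots,l_r} d_1^{l_1}\cdots d_r^{l_r} \cdot (e_1 \cdots e_n)$ common roots in $X$ (the coefficient $\binom{n}{l_1,\dots,l_r}d_1^{l_1}\cdots d_r^{l_r}$ being the reading of the intersection number $(\sum d_k h_k)^n$ in the Chow ring of $X$, where $h_k$ is the hyperplane class of $\PP^{l_k}$ and $h_k^{l_k+1}=0$). Then $\deg_{f_i}\mathcal{R}$ equals the number of common roots of the remaining $n$ equations $\{f_j : j \neq i\}$, by the standard argument that specializes $f_i$ to a generic linear form in one coordinate and counts how $\mathcal{R}$, restricted to a generic line in the coefficient space of $f_i$, factors through these roots (each simple root of the subsystem contributing one zero of $\mathcal{R}$). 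That count is $\binom{n}{l_1,\dots,l_r} d_1^{l_1}\cdots d_r^{l_r} \prod_{j\neq i} s_j$, which is the stated expression.

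The main obstacle is not the arithmetic but justifying the identity $\deg_{f_i}\mathcal{R} = \#\{\text{common roots of } f_j, j\neq i\}$ rigorously in the scaled (mixed) setting — i.e.\ that the incidence/elimination argument goes through and no extraneous multiplicity appears. This rests on genericity: for generic coefficients the subsystem $\{f_j\}_{j\neq i}$ has only simple roots in $X$, all lying in the torus so that the toric and multihomogeneous counts agree, and the projection from the incidence variety to the coefficient space of $f_i$ is generically finite of the asserted degree. I expect to handle this by citing the standard theory of sparse resultants (the degree-equals-mixed-volume theorem) together with the multihomogeneous B\'ezout count, rather than reproving it.
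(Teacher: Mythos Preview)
Your proposal is correct. Your second route---reducing $\deg_{f_i}\mathcal R$ to the multihomogeneous B\'ezout count for the subsystem $\{f_j:j\neq i\}$---is exactly what the paper does, only packaged more tersely: the paper reads $\deg_{f_i}\mathcal R$ as the coefficient of $y_1^{l_1}\cdots y_r^{l_r}$ in $\prod_{j\neq i}(s_jd_1y_1+\cdots+s_jd_ry_r)$, factors out $\prod_{j\neq i}s_j$, and cites \cite[Sect.~4]{SZ94} for the unmixed coefficient $\binom{n}{l_1,\dots,l_r}d_1^{l_1}\cdots d_r^{l_r}$, thereby sidestepping any discussion of incidence varieties or genericity. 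Your first route via mixed volumes is a legitimate alternative and yields the same arithmetic (the factorization $\mathrm{MV}(s_jP)_{j\neq i}=(\prod_{j\neq i}s_j)\,n!\,\mathrm{vol}(P)$ is precisely the factoring of $s_j$ out of each linear form), with the advantage that the identity $\deg_{f_i}\mathcal R=\mathrm{MV}$ is a clean black-box citation from sparse resultant theory; the paper's advantage is brevity, since the generating-function form of the multihomogeneous B\'ezout bound is already established in \cite{SZ94} for exactly this purpose.
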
 
\begin{proof}
The degree $\deg_{f_i}\mathcal R$ of $\mathcal R(\bm f)$ 
with respect to $f_i$ is the coefficient of $y_1^{l_1} \cdots y_r^{l_r}$
  in the new polynomial:
$$
\prod_{j\neq i}  (  s_jd_{1}y_1+ \cdots + s_jd_{r}y_r) 
=
\prod_{j\neq i}  s_j (  d_{1}y_1+ \cdots + d_{r}y_r) 
=
\frac{s_0s_1\cdots s_n}{s_i} (  d_{1}y_1+ \cdots + d_{r}y_r)^n. \label{mbezoutpoly}
$$
In~\cite[Sect.4]{SZ94} the coefficient of $y_1^{l_1} \cdots y_r^{l_r}$ 
in $(  d_{1}y_1+ \cdots + d_{r}y_r)^n$ is shown to be equal to
$$
\binom{n}{l_1,\dots,l_r} d_1^{l_1}\cdots d_r^{l_r}  ,
$$
thus proving the formula in the unmixed case. Hence the coefficient of
$y_1^{l_1} \cdots y_r^{l_r}$ 
in our case is this number multiplied by $\ds\frac{s_0s_1\cdots s_n}{s_i}$.
\end{proof}
This yields the total degree of the resultant, that is,
$\sum_{i=0}^n \deg_{f_i}\mathcal R$.

The rest of the section gives details on the underlying theory.  The
vanishing of the multihomogeneous resultant can be expressed as the
failure of a complex of sheaves to be exact.  This allows to construct
a class of complexes of finite-dimensional vector spaces whose
determinant is the resultant polynomial.  This definition of the
resultant was introduced by Cayley \cite[App.~A]{GKZ},
\cite{Wey94}.

For $\bm u\in\ZZ^r$, $H^q\left (X,\mathcal O_X(\bm u) \right)$ denotes
the $q$-th cohomology of $X$ with coefficients in the sheaf ${\mathcal
  O_X}(\bm u)$.  Throughout this paper we write for simplicity $H^q(\bm
u)$, even though we also keep the reference to the space whenever it
is different than $X$, for example $H^0(\PP^{l_k},u_k)$.
To a polynomial system $\bm f=(f_0,\dots,f_n)$ over $V$, we associate
a finite complex of sheaves $K_\bullet$ on $X$~:
\begin{equation}
  0 \to K_{n+1}\to \cdots \xrightarrow{\delta_2} K_{1} \xrightarrow{\delta_1} K_0 \xrightarrow{\delta_0} \cdots \to K_{-n}\to 0
\end{equation}
This complex (whose terms are defined in Definition~\ref{WeyComplex} below)
is known to be exact if and only if $f_0,\ldots,f_n$ share no zeros in $X$; it is
hence generically exact. When passing from the complex of sheaves to a
complex of vector spaces there exists a degree of freedom, expressed
by a vector $\bm m=(m_1,\ldots,m_r)\in\ZZ^r$.  For every given $\bm f$
we specialize the differentials $\delta_i : K_i\to K_{i-1}$,
$i=1-n,\dots,n+1$ by evaluating at $\bm f$ to get a complex of
finite-dimensional vector spaces. The main property is that the
complex is exact if and only if $\mathcal R(f_0,\ldots,f_n) \not= 0$
\cite[Prop.1.2]{Wey94}.

The main construction that we study is this complex, which we define
in our setting.  It extends the unmixed case, where for given $p$ the
direct sum collapses to $\binom{n+1}{p}$ copies of a single cohomology
group.
\begin{Definition}\label{WeyComplex}
For  $\bm m\in\ZZ^r$, $\nu=-n,\dots,n+1$ and $p=0,\dots,n+1$ set
\begin{align*}
K_{\nu,p}
= & \bigoplus_{ 0\leq i_1<\cdots<i_p\leq n} H^{p-\nu}\left(\bm m -\sum_{\theta=1}^p s_{i_\theta} \bm d  \right)
\end{align*}
where the direct sum is over all possible indices $i_1< \cdots<
i_p$. The \emph{Weyman complex} $K_\bullet=K_\bullet(\bm l,\bm d,\bm
s,\bm m)$ is generically exact and has terms $\ds K_{\nu} =
\bigoplus_{p=0}^{n+1} K_{\nu,p}$.
\end{Definition}

This generalizes the classic \emph{Cayley-Koszul} complex. The
determinant of the complex can be expressed as a quotient of products
of minors from the $\delta_i$. It is invariant under different
choices of $\bm m\in\ZZ^r$ and equals the multihomogeneous resultant
$\mathcal R(f_0,\dots,f_n)$.

\subsection{Combinatorics of K$_\bullet$} \label{complex}

We present a combinatorial description of the terms in our complex,
applicable to the unmixed case as well. For details on the
co-homological tools that we use, see \cite{GKZ}.

By the K\"unneth formula, we have the decomposition
\begin{equation} \label{Kunneth}
H^{q}\left(\bm\alpha \right) =  \bigoplus_{j_1+\cdots+j_r=q}^{j_k\in\{0,l_k\}}
\bigotimes_{k=1}^r H^{j_k}\left(\PP^{l_k}, \alpha_k \right),
\end{equation}
where $q=p-\nu$ and the direct sum runs over all integer sums
$j_1+\cdots+j_r=q, j_k\in\{0,l_k\}$.  In particular, $H^0(\PP^{l_k},
\alpha_k)$ is isomorphic to $S(\alpha_k)$, the graded piece of
$\PP^{l_k}$ in degree $\alpha_k$ or, equivalently, the space of all
homogeneous polynomials in $l_k+1$ variables with total degree
$\alpha_k$, where $\bm \alpha= \bm m-z\bm d \in\ZZ^r$ for $z\in\ZZ$.

By Serre duality, for any $\bm \alpha \in \ZZ^r$, we know that
\begin{equation} \label{Serre}
H^q ( \bm \alpha)  \simeq H^{n-q}( -\bm l-\boldsymbol 1 - \bm \alpha)^*,
\end{equation}
where ${}^*$ denotes dual, and ${\bf 1}\in\NN^r$ a vector full of ones.
Therefore $H^j(\alpha_k)^* \simeq H^{l_k-j}(-\alpha_k-1-l_k)$.

Furthermore, we identify $H^{l_k}(\PP^{l_k}, \alpha_k)$ as the dual space
$ S(-\alpha_k - l_k -1 )^*$.
This is the space of linear functions $\Lambda:S(\alpha_k)\to \mathbb
F$.  Sometimes we use the ne\-ga\-ti\-ve symmetric powers to interpret
dual spaces, see also \cite[p.576]{WeZe}.  This notion of duality is
naturally extended to the direct sum of cohomologies: the dual of a
direct sum is the direct sum of the duals of the summands.  The next
proposition (Bott's formula) implies that this dual space is
nontrivial if and only if $-\alpha_k - l_k -1\ge 0$.
\begin{proposition}{\rm \cite{Bot57}}\label{Botts}
  For any $\bm \alpha\in\ZZ^r$ and $k\in\{1,\dots,r\}$,\\
  (a) $H^{j}(\PP^{l_k}, \alpha_k ) =0 ,\ \forall j \not= 0, l_k$,\\
  (b) $H^{l_k}(\PP^{l_k}, \alpha_k)\ne 0\Leftrightarrow \alpha_k< -l_k$, $\dim H^{l_k}(\PP^{l_k}, \alpha_k) = {\binom{-\alpha_k -1}{l_k}}$.\\
  (c) $H^{0}(\PP^{l_k}, \alpha_k)\ne 0\Leftrightarrow \alpha_k \ge 0$,
  $\dim H^{0}(\PP^{l_k}, \alpha_k) = {\binom{\alpha_k +l_k}{l_k}}$.
\end{proposition}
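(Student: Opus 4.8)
The plan is to handle the three parts in the order (c), (b), (a). Parts (c) and (b) are immediate from elementary facts together with Serre duality, while part (a) --- the vanishing of the intermediate cohomology --- is the real content, so I will put the weight there.

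For (c), I would identify $H^0(\PP^{l_k},\alpha_k)$ with the space of global sections of $\mathcal O_{\PP^{l_k}}(\alpha_k)$, i.e.\ with the space of degree-$\alpha_k$ forms in $l_k+1$ variables; this is $0$ for $\alpha_k<0$ and has dimension $\binom{\alpha_k+l_k}{l_k}$ for $\alpha_k\ge 0$ by counting monomials. For (b), I would apply Serre duality on $\PP^{l_k}$ in the form $H^{l_k}(\PP^{l_k},\alpha_k)\simeq H^0(\PP^{l_k},-\alpha_k-l_k-1)^*$, which is the special case of~(\ref{Serre}) already recorded in the text, and feed in part (c): the right-hand side is nonzero exactly when $-\alpha_k-l_k-1\ge 0$, i.e.\ $\alpha_k<-l_k$, with dimension $\binom{-\alpha_k-l_k-1+l_k}{l_k}=\binom{-\alpha_k-1}{l_k}$.

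For (a), note first that $H^j(\PP^{l_k},\alpha_k)=0$ for $j>l_k$ by Grothendieck vanishing, so only the range $0<j<l_k$ is at issue; here I would induct on $l:=l_k$, the statement being vacuous for $l\le 1$. For the inductive step, take a hyperplane $H\cong\PP^{l-1}\subset\PP^l$ and twist its structure sequence to $0\to\mathcal O_{\PP^l}(\alpha-1)\to\mathcal O_{\PP^l}(\alpha)\to\mathcal O_{\PP^{l-1}}(\alpha)\to 0$. In the associated long exact cohomology sequence, the inductive hypothesis on $\PP^{l-1}$ makes both flanking terms $H^{j-1}(\PP^{l-1},\alpha)$ and $H^j(\PP^{l-1},\alpha)$ vanish when $2\le j\le l-2$, giving an isomorphism $H^j(\PP^l,\alpha-1)\simeq H^j(\PP^l,\alpha)$ for every $\alpha$; for $j=1$ the restriction map $H^0(\PP^l,\alpha)\to H^0(\PP^{l-1},\alpha)$ on forms is surjective, and for $j=l-1$ the term $H^{l-2}(\PP^{l-1},\alpha)$ still vanishes, so in these two boundary cases one still gets at least an injection $H^j(\PP^l,\alpha-1)\hookrightarrow H^j(\PP^l,\alpha)$. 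In every case this yields an ascending chain of injections $\cdots\hookrightarrow H^j(\PP^l,\alpha)\hookrightarrow H^j(\PP^l,\alpha+1)\hookrightarrow\cdots$, which must consist of zero spaces because $H^j(\PP^l,\mathcal O(N))=0$ for $j>0$ and $N$ large (Serre vanishing for the ample $\mathcal O(1)$); alternatively the case $j=l-1$ can be reduced to $j=1$ by Serre duality as in (b). The main obstacle is exactly this part (a): it is the only step not formally reducible to Serre duality plus monomial counting, and to make it rigorous one must either push the induction carefully through its two boundary degrees $j\in\{1,l-1\}$ as above, or else compute the \v{C}ech cohomology of $\bigoplus_{\alpha}\mathcal O_{\PP^l}(\alpha)$ for the standard affine cover $\{x_i\ne 0\}$ and verify exactness of the \v{C}ech complex in the intermediate degrees.
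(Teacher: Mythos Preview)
The paper does not prove this proposition at all: it is quoted from~\cite{Bot57} and used as a black box, so there is no ``paper's own proof'' to compare against. Your argument is a correct and standard proof of the cohomology of line bundles on projective space (essentially the one in Hartshorne~III.5.1): (c) is monomial counting, (b) follows from (c) via Serre duality~(\ref{Serre}) exactly as the paper already notes in the paragraph preceding the proposition, and your inductive proof of (a) via the twisted hyperplane sequence together with Serre vanishing is sound, including the handling of the boundary degrees $j=1$ and $j=l-1$. One minor remark: invoking Serre duality to deduce (b) is fine in the context of this paper since~(\ref{Serre}) is taken as given, but in a self-contained development one would want to be careful about circularity, as Serre duality on $\PP^l$ is often established only after computing these very cohomology groups; the \v{C}ech alternative you mention at the end avoids this entirely.
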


\begin{Definition}\label{Dcrit}
Given $\bm l,\bm d\in\NN^r$ and $\bm s\in\NN^{n+1}$,
define the {\em critical degree} vector
$\rho\in\NN^r$ by $\bm \rho_k:= d_k \sum_{\theta=0}^n s_{\theta} -l_k -1,$ for all $k=1,\ldots r$.
\end{Definition}

The K\"unneth formula~(\ref{Kunneth}) states that $H^q(\bm\alpha)$ is
a sum of products.  We can give a better description:
\begin{lemma}\label{qth_Coh}
  If $H^{q}(\bm \alpha)$ is nonzero, then it is equal to a product
  $H^{j_1}(\PP^{l_k},\alpha_k)\otimes\cdots\otimes
  H^{j_r}(\PP^{l_k},\alpha_k)$ for some integers $j_1,\dots,j_r$ with
  $j_k\in\{0,l_k\},\ \sum_{k=1}^r j_k=q$.
\end{lemma}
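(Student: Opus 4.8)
The plan is to derive the statement directly from the two cohomological facts already recorded in the excerpt: the K\"unneth decomposition~(\ref{Kunneth}) and Bott's formula (Proposition~\ref{Botts}). The K\"unneth formula writes $H^{q}(\bm\alpha)$ as a direct sum, indexed by the tuples $(j_1,\dots,j_r)$ with $j_k\in\{0,l_k\}$ and $\sum_{k=1}^r j_k=q$, of the tensor products $\bigotimes_{k=1}^r H^{j_k}(\PP^{l_k},\alpha_k)$. Hence it suffices to show that at most one of these tensor products is nonzero; the nonzero one is then automatically the claimed single product.

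First I would use that a tensor product of $\mathbb F$-vector spaces is nonzero if and only if every factor is nonzero; thus the summand attached to a tuple $(j_1,\dots,j_r)$ is nonzero exactly when $H^{j_k}(\PP^{l_k},\alpha_k)\ne 0$ for every $k$. Next, for each fixed $k$ I would invoke parts~(b) and~(c) of Bott's formula: $H^{0}(\PP^{l_k},\alpha_k)\ne 0$ forces $\alpha_k\ge 0$, while $H^{l_k}(\PP^{l_k},\alpha_k)\ne 0$ forces $\alpha_k<-l_k$, and these two conditions are mutually exclusive. Therefore, for each $k$ there is at most one exponent $j_k^\ast\in\{0,l_k\}$ with $H^{j_k^\ast}(\PP^{l_k},\alpha_k)\ne 0$, and which one it is is dictated by the sign and size of $\alpha_k$. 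Consequently the tuple $(j_1^\ast,\dots,j_r^\ast)$ is the only one whose tensor product can survive.

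To conclude, assume $H^{q}(\bm\alpha)\ne 0$. Then some summand in the K\"unneth decomposition of $H^{q}(\bm\alpha)$ is nonzero, so each $j_k^\ast$ is well defined, the tuple $(j_1^\ast,\dots,j_r^\ast)$ occurs among the indices of that decomposition (in particular $\sum_{k=1}^r j_k^\ast=q$), and the associated tensor product is the unique nonzero summand; hence $H^{q}(\bm\alpha)=\bigotimes_{k=1}^r H^{j_k^\ast}(\PP^{l_k},\alpha_k)$, which is the asserted form. I do not anticipate any real obstacle: the whole content is the incompatibility of the conditions $\alpha_k\ge 0$ and $\alpha_k<-l_k$ in Bott's formula, which collapses the K\"unneth sum to a single term. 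The only points needing a line of care are the nonvanishing criterion for a tensor product over a field and bookkeeping of the constraint $\sum_{k=1}^r j_k=q$, both of which are routine.
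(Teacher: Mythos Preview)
Your proposal is correct and follows essentially the same route as the paper: use the K\"unneth decomposition~(\ref{Kunneth}) and then invoke Proposition~\ref{Botts} to see that for each $k$ the conditions $\alpha_k\ge 0$ and $\alpha_k<-l_k$ are mutually exclusive, so at most one tuple $(j_1,\dots,j_r)$ gives a nonzero summand. The paper's version is simply more terse, compressing your argument into two sentences.
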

\begin{proof}
  By Proposition~\ref{Botts}(a), only $H^0(\PP^{l_k},\alpha_k)$ or
  $H^{l_k}(\PP^{l_k},\alpha_k) $ may be nonzero.  By
  Proposition~\ref{Botts}(b,c) at most one of them appears.
\end{proof}

Combining Lemma~\ref{qth_Coh} with Definition~\ref{WeyComplex} and~(\ref{Kunneth}) we get
\begin{equation}
  K_{\nu,p} = \bigoplus_{ 0\leq i_1<\cdots<i_p\leq n}
  \bigotimes_{k=1}^r H^{j_k}\left(\PP^{l_k}, m_k- \sum_{\theta=1}^p s_{i_\theta} d_k \right)
\end{equation}
for some integer sums $j_1+\cdots+j_r=p-\nu,\ j_k\in\{0,l_k\}$ such
that all the terms in the product do not vanish.  Consequently, $\ds
\dim H^q\left(\bm \alpha \right)
= %\sum_{j_1+\cdots+j_r=q}^{j_k\in\{0,l_k\}}
\prod_{k=1}^r \dim H^{j_k}\left(\PP^{l_k}, \alpha_k\right).  $ The
dimension of $K_{\nu,p}$ follows by taking the sum over all
$\bm \alpha=\bm m-\sum_{\theta=1}^p s_{i_\theta} \bm d$, for all combinations
$\{i_1<\cdots< i_p\}\subseteq\{0,\dots,n\}.$

Throughout this paper we denote $[u,v]:=\{u,u+1,\dots,v\}$; given
$p\in[0,n+1]$, the set of possible sums of $p$ coordinates out of
vector $\bm s$ is 
$$
 S_p:=\left\{ \sum_{\theta=1}^p s_{i_\theta}\ :\
  0\leq i_1<\cdots<i_p\leq n \right\}
$$ 
and by convention $S_0=\{0\}$. By Proposition~\ref{Botts}, the set of
integers $z$ such that both $H^0(\PP^{l_k},m_k-zd_k)$ and
$H^{l_k}(\PP^{l_k},m_k-zd_k)$ vanish is:
$$
P_k:=\left(\frac{m_k}{d_k},\frac{m_k+l_k}{d_k}  \right]\cap\ZZ.
$$
We adopt notation from \cite{WeZe}: for $u\in \ZZ$, $P_k<u \iff u >
\frac{m_k+l_k}{d_k} $ and $P_k>u \iff u \leq \frac{m_k}{d_k} $. Note
that we use this notation even if $P_k=\varnothing$.
As a result, the $z\in\ZZ$ that lead to a nonzero
$H^{j_k}(\PP^{l_k},m_k-zd_k)$, for $j_k=l_k$ or $j_k=0$, and
$p\in[0,n+1]$, lie in:
\begin{equation} \label{Qpsets} Q_p= S_p \setminus \cup_{1}^r P_k,
  \mbox{ and } Q=\cup_{p=0}^{n+1}Q_p.
\end{equation}
Now $\#P_k\le l_k$ implies $\#(\cup_k P_k)\le n$.  So $\#(\cup_p
S_p)\ge n+2$ implies $\#Q\ge 2$.
We define a function $q : Q\to [0,n]$ by
\begin{equation} \label{qmap} q(z):=\sum_{P_k<z}l_k.
\end{equation}
Observe that $H^j(X,\bm m-z\bm d)\neq 0 \iff z\in Q$ and $j=q(z)$; also the
system is unmixed if and only if $S_p=\{p\}$. Clearly $1\leq \#S_p \leq
\binom{n+1}{p}$, the former inequality being strict for $\bm s\neq
\boldsymbol 1\in\NN^{n+1}$ and $p\neq 0,n+1$.

The following lemma generalizes \cite[Prop.2.4]{WeZe}.

\begin{lemma} \label{KnpNz} Let $\nu\in\ZZ$, $p\in\{0,\dots,n+1\}$ and
  $K_{\nu,p}$ given by Definition~\ref{WeyComplex}; then $\ds K_{\nu,p}\neq
  0 \iff \nu\in\left\{p-q(z)\ :\ z\in Q_p\right\}$.
\end{lemma}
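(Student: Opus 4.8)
The plan is to unwind the definition of $K_{\nu,p}$ as a direct sum and reduce the nonvanishing question to a statement about a single cohomology group $H^{p-\nu}(\bm m-z\bm d)$, which is already characterized by the combinatorial data $Q$ and the function $q(\cdot)$ assembled from the K\"unneth formula~(\ref{Kunneth}) and Bott's formula (Proposition~\ref{Botts}).

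First I would recall that, by Definition~\ref{WeyComplex}, $K_{\nu,p}=\bigoplus_I H^{p-\nu}(\bm m-z_I\bm d)$, where $I=\{i_1<\cdots<i_p\}$ ranges over the $p$-subsets of $\{0,\dots,n\}$ and $z_I:=\sum_{\theta=1}^p s_{i_\theta}$. A direct sum of vector spaces is nonzero precisely when at least one summand is nonzero, so $K_{\nu,p}\neq 0$ iff $H^{p-\nu}(\bm m-z_I\bm d)\neq 0$ for some $I$. By the very definition of $S_p$, the assignment $I\mapsto z_I$ has image exactly $S_p$; hence $K_{\nu,p}\neq 0$ iff $H^{p-\nu}(\bm m-z\bm d)\neq 0$ for some $z\in S_p$.

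Next I would invoke the characterization established just above the lemma, namely $H^j(X,\bm m-z\bm d)\neq 0 \iff z\in Q$ and $j=q(z)$, which itself follows from Lemma~\ref{qth_Coh} (K\"unneth together with Bott's vanishing) applied to $\bm m-z\bm d$, plus the definitions of $P_k$, $Q$ and $q$. Applying this with $j=p-\nu$: for a given $z\in S_p$, $H^{p-\nu}(\bm m-z\bm d)\neq 0$ iff $z\in Q$ and $\nu=p-q(z)$. The one bookkeeping point to verify is that, for $z\in S_p$, membership $z\in Q$ is the same as $z\in Q_p$: since $Q=\bigcup_{p'=0}^{n+1}Q_{p'}=\bigcup_{p'}\bigl(S_{p'}\setminus\bigcup_k P_k\bigr)=\bigl(\bigcup_{p'}S_{p'}\bigr)\setminus\bigcup_k P_k$, having $z\in Q$ is equivalent to $z\notin\bigcup_k P_k$, and intersecting with $z\in S_p$ gives precisely $z\in S_p\setminus\bigcup_k P_k=Q_p$ (so in particular $q$ is defined on $Q_p$, as $Q_p\subseteq Q$).

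Putting these together yields $K_{\nu,p}\neq 0 \iff$ there exists $z\in Q_p$ with $\nu=p-q(z)$, i.e.\ $\nu\in\{\,p-q(z):z\in Q_p\,\}$, which is the claim. I do not expect a genuine obstacle here: all the cohomological content is already in place, so the only care needed is (a) the reduction of the direct sum's nonvanishing to a single nonzero summand and (b) the set-theoretic identification $z\in Q\Leftrightarrow z\in Q_p$ for $z\in S_p$, which keeps $Q_p$ from being conflated with the larger set $Q$. As a sanity check I would specialize to the unmixed case $\bm s=\bm 1$, where $S_p=\{p\}$ and the statement collapses to $K_{\nu,p}\neq 0\iff\nu=p-q(p)$ and $p\notin\bigcup_k P_k$, recovering \cite[Prop.2.4]{WeZe}.
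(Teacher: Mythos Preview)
Your proof is correct and follows essentially the same approach as the paper's: both reduce the nonvanishing of the direct sum to that of a single summand $H^{p-\nu}(\bm m-z\bm d)$ and then invoke the K\"unneth/Bott characterization $H^j(\bm m-z\bm d)\neq 0\iff z\in Q$ and $j=q(z)$. Your write-up is in fact slightly more careful about the bookkeeping identity $Q\cap S_p=Q_p$, which the paper leaves implicit.
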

\begin{proof} 
  Assuming $K_{\nu,p}\neq 0$, there exists a nonzero summand
  $H^{p-\nu}(\bm m-z\bm d)\neq 0$. By Lemma~\ref{qth_Coh} it is equal to
  $H^{j_1}(\PP^{l_k},m_k-zd_k)\otimes\cdots\otimes
  H^{j_r}(\PP^{l_k},m_k-zd_k)\neq 0,\ j_k\in\{0,l_k\}$ and
  $$ p-\nu=\sum_{k=1}^r j_k= \sum_{P_k<z}l_k \Rightarrow \nu=p-\sum_{P_k<z}l_k . $$
  Conversely, if $\nu\in \{p-q(z) : z\in Q_p \}$ then $Q_p\neq
  \varnothing$. Now $z\in Q_p$ implies $z\notin P$, which means
  $H^{q(z)}(\bm m-z\bm d)\neq 0$, the latter being a summand of
  $K_{\nu,p}$.
\end{proof}

One instance of the complexity of the mixed case is that in the
unmixed case, given $p\in [0,n+1]$, there exists at most one integer
$\nu$ such that $K_{\nu,p}\neq 0$.

All formulae (including determinantal ones) come in dual pairs, thus
generalizing \cite[Prop.4.4]{DE03}.
\begin{lemma} \label{dualVecs} Assume $\bm m, \bm m'\in \ZZ^r$ satisfy
  $\bm m + \bm m'= \bm \rho$, where $\bm \rho$ is the critical degree
  vector.
  Then, $K_\nu(\bm m)$ is dual to $K_{1-\nu}(\bm m')$ for all $\nu \in \ZZ.$
  In particular, $\bm m$ is determinantal if and only if $\bm m'$ is
  determinantal, yielding matrices of the same size, namely
  $\dim(K_0(\bm m)) = \dim (K_1(\bm m')).$
\end{lemma}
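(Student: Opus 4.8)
The plan is to establish the duality term-by-term, i.e.\ to show that each summand $K_{\nu,p}(\bm m)$ is naturally dual to a corresponding summand of $K_{1-\nu}(\bm m')$, and then assemble these identifications into a duality of the whole complexes. First I would fix an index set $\{i_1<\cdots<i_p\}\subseteq\{0,\dots,n\}$ and look at the associated summand $H^{p-\nu}\!\left(\bm m-\sum_{\theta=1}^p s_{i_\theta}\bm d\right)$ of $K_{\nu,p}(\bm m)$. Apply Serre duality~(\ref{Serre}): this space is isomorphic to $H^{n-(p-\nu)}\!\left(-\bm l-\bm 1-\bm m+\sum_{\theta=1}^p s_{i_\theta}\bm d\right)^{*}$. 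The key algebraic step is to rewrite the argument of this cohomology group using the complementary index set. Writing $\{i_1',\dots,i_{n+1-p}'\}$ for the complement of $\{i_1,\dots,i_p\}$ in $\{0,\dots,n\}$, we have $\sum_{\theta=1}^{n+1-p} s_{i_\theta'} = \left(\sum_{\theta=0}^n s_\theta\right) - \sum_{\theta=1}^p s_{i_\theta}$, so that, using $\bm m+\bm m'=\bm\rho$ and the definition $\rho_k = d_k\sum_{\theta=0}^n s_\theta - l_k -1$ from Definition~\ref{Dcrit}, one checks coordinatewise that
\[
-\bm l-\bm 1-\bm m+\sum_{\theta=1}^p s_{i_\theta}\bm d
= \bm m' - \sum_{\theta=1}^{n+1-p} s_{i_\theta'}\bm d .
\]
Hence the chosen summand of $K_{\nu,p}(\bm m)$ is dual to $H^{\,n-p+\nu}\!\left(\bm m'-\sum s_{i_\theta'}\bm d\right)$, which is precisely the summand of $K_{\nu',p'}(\bm m')$ indexed by the complementary set, with $p'=n+1-p$ and cohomological degree $p'-\nu' = n+1-p-\nu' = n-p+\nu$, i.e.\ $\nu' = 1-\nu$.

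Next I would observe that the assignment $\{i_1,\dots,i_p\}\mapsto$ its complement is a bijection between the index sets summing to $K_{\nu,p}(\bm m)$ and those summing to $K_{1-\nu,n+1-p}(\bm m')$, so summing the term-by-term Serre dualities over all index sets and then over $p=0,\dots,n+1$ yields a natural isomorphism $K_\nu(\bm m) \simeq K_{1-\nu}(\bm m')^{*}$, using the convention (stated in the excerpt) that the dual of a direct sum is the direct sum of the duals. To complete the proof that $\bm m$ is determinantal iff $\bm m'$ is, recall that $\bm m$ being determinantal means the Weyman complex $K_\bullet(\bm m)$ has only two nonzero terms, namely $K_0(\bm m)$ and $K_1(\bm m)$. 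By the duality just established, $K_\nu(\bm m)=0 \iff K_{1-\nu}(\bm m')=0$; as $\nu$ ranges over $\ZZ$, the nonvanishing terms of $K_\bullet(\bm m')$ are exactly the $K_{1-\nu}(\bm m')$ with $K_\nu(\bm m)\neq 0$, so $K_\bullet(\bm m')$ has two nonzero terms precisely when $K_\bullet(\bm m)$ does, and these sit in homological degrees $1-0=1$ and $1-1=0$. Finally, the matrix size is $\dim K_0(\bm m)$ for the $\bm m$-formula and $\dim K_1(\bm m')$ for the $\bm m'$-formula, and these agree because $K_0(\bm m)\simeq K_1(\bm m')^{*}$ forces equal dimension over the field $\mathbb F$.

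I expect the main obstacle to be purely bookkeeping: carrying out the coordinatewise verification of the displayed identity cleanly, and—more delicately—checking that the duality is compatible with the differentials $\delta_i$, so that one genuinely gets a duality of \emph{complexes} and not merely a degreewise isomorphism of the terms. For the statement as phrased (which only asserts duality of the terms $K_\nu$ and the consequence for determinantal vectors and matrix sizes) the differential compatibility is not strictly needed, but it is the natural place a referee would probe; I would handle it by appealing to the functoriality of Serre duality and the fact, recorded after Definition~\ref{WeyComplex}, that the determinant of the complex is independent of $\bm m$, so the dual complex $K_\bullet(\bm m')^{*}$ computes the same resultant. One small case to mention explicitly is the degenerate situation where some $P_k=\varnothing$ or where $Q$ collapses, but Lemma~\ref{KnpNz} already characterizes nonvanishing of $K_{\nu,p}$ in terms of $Q_p$ and $q(z)$, and the complementation bijection on index sets respects the set $Q$ since $z\in Q$ is symmetric under $\bm m\leftrightarrow\bm m'=\bm\rho-\bm m$ by Bott's formula (Proposition~\ref{Botts}); so no separate treatment is required.
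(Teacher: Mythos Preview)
Your proposal is correct and follows essentially the same route as the paper: both arguments apply Serre duality to each cohomology summand, pass to the complementary index set to rewrite the twist as $\bm m'-\sum s_{i'_\theta}\bm d$ via the identity $\bm m+\bm m'=\bm\rho$, and then read off the index shift $(\nu,p)\mapsto(1-\nu,n+1-p)$ before summing. Your additional remarks on differential compatibility and degenerate $P_k$ go beyond what the paper proves (it, like you, only establishes term-wise duality), but they are not needed for the statement as given.
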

\begin{proof}
  Based on the equality $\bm m +\bm m'= \bm \rho$ we deduce that for
  all $J\subseteq [0,n+1]$, it holds that $\bm m' -\sum_{i\in J}
  s_i\bm d = -\bm l - \boldsymbol 1 - (\bm m- \sum_{i\notin J}s_i\bm
  d).$ Therefore, for all $q=0,\dots,n$, Serre's duality (\ref{Serre})
  implies that $H^q(X, \bm m'-\sum_{i\in J}s_i\bm d)$ and $H^{n-q}(X,
  \bm m- \sum_{i\notin J}s_i\bm d )$ are dual.

  Let $\#J=p$ and $\nu=p-q$; since $(n+1-p) - (n-q) = 1 -(p-q)=1-\nu$,
  we deduce that $K_{\nu, p}(\bm m)$ is dual to $K_{1-\nu, n+1-p}(\bm m')
  $ for all $p\in[0,n+1]$ which leads to $K_\nu(\bm m)^*\simeq
  K_{1-\nu}(\bm m')$ for all $\nu \in \ZZ$, as desired.  In
  particular, $K_{-1}(\bm m) \simeq K_2^*(\bm m')$ and $K_0(\bm m)
  \simeq K_1^*(\bm m')$, the latter giving the matrix dimension in the
  case of determinantal formulae.
\end{proof}

\section{Determinantal formulae} \label{determinantal}

This section focuses on formulae that yield square matrices expressing
the resultant without extraneous factors and prescribes the
corresponding {determinantal} $\bm m$-vectors.

Determinantal formulae occur only if there is exactly one nonzero
differential, so the complex consists of two consecutive nonzero
terms. The determinant of the complex is the determinant of this
differential.  We now specify this differential; for the unmixed case
see \cite[Lem.3.3]{WeZe}.
\begin{lemma} \label{det01} If $\bm m\in\ZZ^r$ is determinantal then the
  nonzero part of the complex is $\delta_1: K_1\to K_0$.
\end{lemma}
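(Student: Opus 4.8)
The plan is to locate, for a determinantal $\bm m$, the unique index $\nu$ for which the surviving two consecutive terms of $K_\bullet(\bm m)$ are $K_\nu\xrightarrow{\delta_\nu}K_{\nu-1}$, and to show $\nu=1$. Everything reduces to a single positivity statement: $K_0(\bm m)\neq 0$ for every $\bm m\in\ZZ^r$ (hence, by Serre duality, also $K_1(\bm m)\neq 0$ for every $\bm m$).

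Granting this, the lemma is immediate. By the discussion preceding the statement, a determinantal $\bm m$ produces exactly one nonzero differential, so $K_\nu$ and $K_{\nu-1}$ are the only nonzero terms of $K_\bullet(\bm m)$. Applying the positivity statement to $\bm m$ gives $K_0(\bm m)\neq 0$, whence $0\in\{\nu,\nu-1\}$ and $\nu\in\{0,1\}$. By Lemma~\ref{dualVecs} the complementary vector $\bm m'=\bm\rho-\bm m$ is again determinantal and $K_1(\bm m)\simeq K_0(\bm m')^{*}$; applying the positivity statement to $\bm m'$ gives $K_1(\bm m)\neq 0$, whence $1\in\{\nu,\nu-1\}$ and $\nu\in\{1,2\}$. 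Intersecting, $\nu=1$. (One can alternatively eliminate $\nu=0$ by the sign convention for determinants of complexes: a two-term complex contributes $\det(\delta_\nu)^{(-1)^{\nu-1}}$, and since $\mathcal R=\det K_\bullet$ is a genuine polynomial of positive degree by Lemma~\ref{mBezout} whereas $\det(\delta_\nu)\in\ZZ[V]$, the exponent must be $+1$, forcing $\nu$ odd.)

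To prove $K_0(\bm m)\neq 0$, by Lemma~\ref{KnpNz} it suffices to exhibit $p\in[0,n+1]$ and $z\in Q_p$ with $q(z)=p$: an integer $z\notin\bigcup_k P_k$ that is a sum of exactly $q(z)=\sum_{P_k<z}l_k$ of the $s_i$. The natural candidates are the strictly increasing partial sums $\sigma_p:=s_0+\cdots+s_{p-1}\in S_p$, $p=0,\dots,n+1$, with $\sigma_0=0$, $\sigma_{n+1}=\sum_i s_i$. Set $h(p):=q(\sigma_p)-p$. As $q$ is non-decreasing, $q(\sigma_{p+1})-q(\sigma_p)\ge 0$, so $h(p+1)-h(p)\ge-1$; also $h(0)=q(0)\ge 0$ and $h(n+1)=q(\sigma_{n+1})-(n+1)\le n-(n+1)<0$. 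Taking $p^{*}$ to be the largest index with $h(p^{*})\ge 0$, the bound $h(p^{*}+1)\ge h(p^{*})-1$ together with $h(p^{*}+1)<0\le h(p^{*})$ forces $h(p^{*})=0$, i.e.\ $q(\sigma_{p^{*}})=p^{*}$, exactly the sought equality for $z=\sigma_{p^{*}}$.

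The one delicate point, which I expect to be the main obstacle, is the membership requirement $\sigma_{p^{*}}\notin\bigcup_k P_k$: the $p^{*}$ just produced may land inside some $P_k$, so that $H^{p^{*}}(\bm m-\sigma_{p^{*}}\bm d)=0$ and $\sigma_{p^{*}}$ fails to witness $K_{0,p^{*}}\neq 0$. To fix this I would exploit the slack $\#\{p:\sigma_p\in\bigcup_k P_k\}\le\sum_k l_k=n$ against the $n+2$ partial sums, and the fact that $q$ is constant on the integer interval spanned by two consecutive partial sums of equal $q$-value, in order to slide from $\sigma_{p^{*}}$ to a neighbouring partial sum, or to another element of $S_{p^{*}}$ with the same $q$-value, that does avoid $\bigcup_k P_k$. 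In the unmixed case $S_p=\{p\}$, $\sigma_p=p$ and the $P_k$ are genuine intervals, so this reduces to the combinatorial bookkeeping of \cite[Lem.~3.3]{WeZe}, which is the model I would follow and cite.
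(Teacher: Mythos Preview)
Your strategy differs substantially from the paper's, and the gap you yourself flag is real and unresolved.

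You aim to prove the stronger, $\bm m$-independent claim that $K_0(\bm m)\ne 0$ for \emph{every} $\bm m\in\ZZ^r$.  Your intermediate-value argument on $h(p)=q(\sigma_p)-p$ correctly produces a $p^{*}$ with $q(\sigma_{p^{*}})=p^{*}$, but, as you note, this is worthless unless $\sigma_{p^{*}}\notin\bigcup_k P_k$.  The proposed ``sliding'' to a nearby element of $S_{p^{*}}$ or a nearby partial sum is not carried out, and in the scaled case the sets $S_p$ overlap and the $P_k$ need not be disjoint, so the bookkeeping you gesture at from \cite{WeZe} does not transfer verbatim.  Without this step the argument does not establish $K_0\ne 0$, and the rest of your reduction (duality plus $0\in\{\nu,\nu-1\}$, $1\in\{\nu,\nu-1\}$) collapses.

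The paper avoids this entirely by \emph{not} trying to show $K_0\ne 0$ in isolation.  Instead it works directly with the set $N=\{\,p-q(z):z\in Q_p,\ p\in[0,n+1]\,\}$ and uses the determinantal hypothesis only at the very end.  Taking $z_1=\min Q\in Q_{p_1}$, the $p_1$ smaller partial sums $\sigma_0,\dots,\sigma_{p_1-1}$ are forced into $\bigcup_{P_k<z_1}P_k$ (because $z_1\notin\bigcup_k P_k$ pins each relevant $P_k$ below $z_1$), whence $p_1\le q(z_1)$ and $N$ contains an element $\le 0$.  The symmetric argument with $z_2=\max Q$ gives an element $\ge 1$ in $N$.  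Only now is determinantality invoked: $N$ consists of two consecutive integers, so $N=\{0,1\}$.  The key point is that working with $\min Q$ and $\max Q$ automatically guarantees the partial sums below $z_1$ (resp.\ above $z_2$) lie in $\bigcup_k P_k$ \emph{and} that the relevant $P_k$'s are on the correct side --- precisely the membership issue that blocks your argument.

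If you want to salvage your route, one promising line is to show that the \emph{largest} $p^{*}$ with $h(p^{*})\ge 0$ necessarily has $\sigma_{p^{*}}\notin\bigcup_k P_k$: if $\sigma_{p^{*}}\in P_j$, let $p'$ be the least index with $P_j<\sigma_{p'}$; then $p'-p^{*}\le\#P_j\le l_j$ and $q(\sigma_{p'})\ge p^{*}+l_j\ge p'$, contradicting maximality of $p^{*}$ --- but you must also handle the boundary case where no such $p'\le n+1$ exists.  This is more work than the paper's argument, which bypasses the issue altogether.
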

\begin{proof}
  The condition that $\bm m$ is determinantal is equivalent to the
  fact that $N:= \{p-q(z):\ z\in Q_p, p\in[0,n+1] \}$ consists of two
  consecutive integers.

  Let $z_1=\min Q < z_2=\max Q$, since $\#Q\ge 2$.  There exist
  $p_1,p_2$ with $p_1< p_2$ such that
  $z_1=\sum_{\theta=1}^{p_1}s_{i_{\theta}}$ and
  $z_2=\sum_{\lambda=1}^{p_2}s_{j_{\lambda}}$ where the indices are
  sub-sequences of $[0,n]$, of length $p_1$ and $p_2$ resp.  The $p_1$
  integers
  $$
  0,s_0,s_0+s_1,\dots,s_0+\cdots+s_{p_1-2}\in\ZZ
  $$
  are distinct, smaller than $z_1$, hence belong to $\cup_{P_k<z_1}
  P_k$. Also, it is clear that, for all $k\in[1,r]$, $\#P_k\leq \lceil
  l_k/d_k \rceil \leq l_k$ thus
  \begin{equation}\label{p1qz1}
    p_1 \leq \# \bigcup_{P_k<z_1} P_k  \leq \sum_{P_k<z_1} \# P_k \leq q(z_1).
  \end{equation}
  This means $p_1-q(z_1) \leq 0$.
  Similarly, the $n+1-p_2$ integers $s_n+\cdots +s_0,\dots,
  s_n+\cdots+ s_{n-p_2}$ are distinct, larger than $z_2$, hence
   belong to $\cup_{P_k>z_2} P_k$, so:
   \begin{equation}\label{p2qz2}
     n-p_2+1 \leq \# \bigcup_{P_k>z_2} P_k \leq \sum_{P_k>z_2} \# P_k
     \leq \sum_{P_k>z_2}l_k.
   \end{equation}
   This means $n+1-p_2\leq n- q(z_2)$, thus $p_2 - q(z_2) \geq 1$.
  Hence there exists a positive integer in $N$; from~(\ref{p1qz1}) we
  must have a non-positive integer in $N$. Since $\#N=2$ and the
  integers of $N$ are consecutive we deduce that $N =\{0,1\}$.
\end{proof}

\begin{corollary} \label{detCor} If $\bm m\in \ZZ^r$ is determinantal,
  then equality holds in~(\ref{p1qz1}).  In particular, for $k
  \in[1,r]$ s.t.\ either $P_k<z_1$ or $P_k>z_2$, we have $ \# P_k =
  l_k $ and any two such $P_k$ are disjoint.
\end{corollary}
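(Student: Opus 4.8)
The plan is to show that, once $\bm m$ is determinantal, the two chains of inequalities (\ref{p1qz1}) and (\ref{p2qz2}) produced inside the proof of Lemma~\ref{det01} are forced to be chains of equalities, and then to read off the two asserted facts from the endpoints of those chains.

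First I would recall from the proof of Lemma~\ref{det01} that determinantality of $\bm m$ gives $N=\{0,1\}$, where $N=\{p-q(z):z\in Q_p,\ p\in[0,n+1]\}$. With $z_1=\min Q$ and the index $p_1$ chosen there so that $z_1=\sum_{\theta=1}^{p_1}s_{i_\theta}$, we have $z_1\in S_{p_1}$ and, since $z_1\in Q$, also $z_1\notin\bigcup_k P_k$; hence $z_1\in Q_{p_1}$, so Lemma~\ref{KnpNz} gives $p_1-q(z_1)\in N$. As (\ref{p1qz1}) already shows $p_1-q(z_1)\leq 0$, we conclude $p_1-q(z_1)=0$, i.e.\ $p_1=q(z_1)$, whence every inequality in (\ref{p1qz1}) is an equality. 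The mirror argument applies to $z_2=\max Q\in Q_{p_2}$: Lemma~\ref{KnpNz} puts $p_2-q(z_2)$ into $N=\{0,1\}$, (\ref{p2qz2}) forces $p_2-q(z_2)\geq 1$, hence $p_2-q(z_2)=1$; and since $z_2\in Q$ implies $\sum_{P_k>z_2}l_k=n-q(z_2)$, the chain (\ref{p2qz2}) likewise collapses to equalities.

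The two consequences are then immediate. The middle equality in (\ref{p1qz1}), namely $\#\bigcup_{P_k<z_1}P_k=\sum_{P_k<z_1}\#P_k$, is by inclusion--exclusion exactly the statement that the sets $P_k$ with $P_k<z_1$ are pairwise disjoint, while the equality $\sum_{P_k<z_1}\#P_k=q(z_1)=\sum_{P_k<z_1}l_k$, combined with the termwise bound $\#P_k\leq\lceil l_k/d_k\rceil\leq l_k$ recorded in the proof of Lemma~\ref{det01}, forces $\#P_k=l_k$ for every $k$ with $P_k<z_1$. The indices $k$ with $P_k>z_2$ are handled in the same way from the equalities in (\ref{p2qz2}), and any $P_j<z_1$ and $P_k>z_2$ are automatically disjoint, since $z_1\leq z_2$ puts all elements of $P_j$ below $z_1$ and all elements of $P_k$ above $z_2$. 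I do not expect a real obstacle: the only delicate point is the bookkeeping that makes Lemma~\ref{KnpNz} applicable, i.e.\ checking that the specific $p_1,p_2$ appearing in the proof of Lemma~\ref{det01} do satisfy $z_1\in Q_{p_1}$ and $z_2\in Q_{p_2}$.
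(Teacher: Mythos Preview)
Your argument is correct and follows essentially the same route as the paper: use Lemma~\ref{KnpNz} (via $z_1\in Q_{p_1}$, $z_2\in Q_{p_2}$) together with $N=\{0,1\}$ to force equality throughout~(\ref{p1qz1}) and~(\ref{p2qz2}), then read off $\#P_k=l_k$ and disjointness from the individual equalities. Your write-up is in fact more explicit than the paper's, which leaves the disjointness of the $P_k$'s (especially the cross case $P_j<z_1$, $P_k>z_2$) and the identity $\sum_{P_k>z_2}l_k=n-q(z_2)$ implicit.
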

\begin{proof}
  Lemma~\ref{det01} combined with Lemma~\ref{KnpNz} imply
  $p_1-q(z_1)\geq 0$, and~(\ref{p1qz1}) implies $p_1-q(z_1) \leq 0$,
  hence we deduce $p_1-q(z_1)= 0$. Now equality in~(\ref{p1qz1}) gives
  $\sum_{P_k<z_1} \# P_k = q(z_1)= \sum_{P_k<z_1}l_k $; combining with
  $\#P_k\leq l_k$ we deduce $\#P_k=l_k$ for all $k$ in this
  sum. Similarly for $n+1-p_2= \sum_{P_k>z_2} l_k = n-q(z_2).$
\end{proof}

\subsection{Bounds for determinantal vectors}

We generalize the bounds in \cite[Sect.3]{DE03} to the
mixed case, for the coordinates of all determinantal $\bm m-$vectors.
We follow a simpler and more direct approach based on a
global view of determinantal complexes.

\begin{lemma} \label{PkContained} If a vector $\bm m\in\ZZ^r$ is
  determinantal then the corresponding $\bigcup_{1}^r P_k$ is
  contained in $\left[ 0, \sum_0^n s_i \right]$.
\end{lemma}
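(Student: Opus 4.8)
The plan is to leverage what the proofs of Lemma~\ref{det01} and Corollary~\ref{detCor} already give about a determinantal $\bm m$: that $z_1=\min Q$ and $z_2=\max Q$ satisfy $0\le z_1<z_2\le\sigma$, where $\sigma:=\sum_{i=0}^n s_i$, and that the $P_k$ lying to the left of $z_1$ (resp.\ to the right of $z_2$) are pairwise disjoint, have $\#P_k=l_k$, and are forced to fill up an explicit set of partial sums of $\bm s$. I want to reduce the claim to the combinatorics of blocks of consecutive integers that avoid $z_1$ and $z_2$.

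First I would record two elementary facts. (i) Each nonempty $P_k=\left(\tfrac{m_k}{d_k},\tfrac{m_k+l_k}{d_k}\right]\cap\ZZ$ is a block of consecutive integers, and by the very definition of $Q$ it is disjoint from $\bigcup_j P_j\supseteq\{z_1,z_2\}$; since a block of consecutive integers missing both $z_1$ and $z_2$ cannot straddle either, each nonempty $P_k$ lies entirely in one of the three integer intervals $(-\infty,z_1)$, $(z_1,z_2)$, $(z_2,+\infty)$, i.e.\ $P_k<z_1$, or ($P_k>z_1$ and $P_k<z_2$), or $P_k>z_2$. (ii) $0\le z_1<z_2\le\sigma$: indeed $Q\subseteq\bigcup_p S_p$ and every element of every $S_p$ is a sum of at most $n+1$ of the positive integers $s_i$, hence lies in $[0,\sigma]$; and $z_1<z_2$ because $\#Q\ge 2$, exactly as in the proof of Lemma~\ref{det01}.

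Then I would dispatch the three cases for a nonempty $P_k$. If $P_k\subseteq(z_1,z_2)$ then $P_k\subseteq[z_1+1,z_2-1]\subseteq[0,\sigma]$ by~(ii). If $P_k<z_1$, I would invoke Corollary~\ref{detCor}: equality in~(\ref{p1qz1}) makes the sets $\{P_j:P_j<z_1\}$ pairwise disjoint with $\#P_j=l_j$, so $\#\bigcup_{P_j<z_1}P_j=\sum_{P_j<z_1}l_j=q(z_1)=p_1$; on the other hand the proof of Lemma~\ref{det01} already places the $p_1$ distinct integers $0,s_0,s_0+s_1,\dots,s_0+\cdots+s_{p_1-2}$ inside this union, so by the cardinality count the union is \emph{exactly} this set of partial sums, which sits in $[0,z_1)\subseteq[0,\sigma]$; hence $P_k\subseteq[0,\sigma]$. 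The case $P_k>z_2$ is symmetric: Corollary~\ref{detCor} together with the $n+1-p_2$ distinct integers $s_n+\cdots+s_0,\dots,s_n+\cdots+s_{n-p_2}$ identifies $\bigcup_{P_j>z_2}P_j$ with that set, which lies in $(z_2,\sigma]\subseteq[0,\sigma]$. Taking the union over $k$ gives $\bigcup_{1}^r P_k\subseteq[0,\sigma]$.

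The main thing to be careful about is the case analysis in (i) — making explicit that a consecutive-integer block avoiding $z_1$ and $z_2$ genuinely lies in a single one of the three intervals — and, in the two extreme cases, being precise that the cardinality count in Corollary~\ref{detCor} does more than bound $\#\bigcup_{P_j<z_1}P_j$: it pins this union down as a concrete set of partial sums of $\bm s$ starting at $0$ (resp.\ ending at $\sigma$), which is exactly what prevents the left-most $P_k$ from dipping below $0$ and the right-most from exceeding $\sigma$. No real estimation is needed; everything is bookkeeping on top of Lemma~\ref{det01} and Corollary~\ref{detCor}.
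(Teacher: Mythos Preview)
Your argument is correct and follows essentially the same route as the paper: split each $P_k$ according to its position relative to $z_1$ and $z_2$, dispose of the middle case by the inclusion $[z_1,z_2]\subseteq[0,\sigma]$, and in the two extreme cases use the equality from Corollary~\ref{detCor} together with the partial sums from the proof of Lemma~\ref{det01} to identify $\bigcup_{P_j<z_1}P_j$ and $\bigcup_{P_j>z_2}P_j$ with explicit finite sets inside $[0,\sigma]$. One small slip: in your point~(i) the containment should read $Q\supseteq\{z_1,z_2\}$, not $\bigcup_j P_j\supseteq\{z_1,z_2\}$; otherwise the write-up is a slightly more explicit version of the paper's proof.
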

\begin{proof}
  It is enough to establish that $P_k>0$ and $P_k< \sum_0^n s_i$ for
  all $k\in[1,r]$. Proof by contradiction: Let $\bm m$ be a determinantal
  vector, and $P=\cup_0^k P_k$.  Let $z_1,z_2$ as in the proof of
  Lemma~\ref{det01}. If $z_1< P_k< z_2$ it is clear that $ 0\leq z_1 <
  P_k < z_2 \leq \sum_0^n s_i \Rightarrow P_k\subseteq \left[ 0,
    \sum_0^n s_i \right]$.

  If $z_2< P_k$, Corollary~\ref{detCor} implies $\# \bigcup_{P_k>z_2} P_k =
  \#R$, where $R:=\{ s_n+\cdots +s_0,\dots, s_n+\cdots+ s_{n-p_2}
  \}$. By the definition of $z_2$, $R \subseteq \bigcup_{P_k>z_2}P_k$,
  thus $\bigcup_{P_k>z_2}P_k = R\subseteq \left[ 0, \sum_0^n s_i
  \right]$.  Similarly, $\bigcup_{P_k<z_1}P_k = \{
  0,s_0,s_0+s_1,\dots,s_0+\cdots+s_{p_1-2} \}\subseteq \left[ 0,
    \sum_0^n s_i \right]$, which proves the lemma for $P_k<z_1$.
\end{proof}

The bound below is proved in \cite[Cor.3.9]{DE03} for the unmixed
case.  They also show with an example that this bound is tight 
with respect to individual coordinates.  We give an independent,
significantly simplified proof, which extends that result to the
scaled case.
\begin{theorem}\label{mBounds}
  For determinantal $\bm m\in\ZZ^r$, for all $k$ we have 
$$
  \max\{-d_k,-l_k\} \le m_k \le d_k \sum_0^n s_i -1 +
  \min\{d_k-l_k,0\} .
$$
\end{theorem}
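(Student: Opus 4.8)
The strategy is to extract the bounds directly from the combinatorial characterization of determinantal vectors, using Lemma~\ref{PkContained} together with the equality case in Corollary~\ref{detCor}. Recall from Lemma~\ref{det01} that a determinantal $\bm m$ forces $N=\{0,1\}$, with $z_1=\min Q$ realizing $p_1-q(z_1)=0$ and $z_2=\max Q$ realizing $p_2-q(z_2)=1$. The key observation is that each interval $P_k=\left(\frac{m_k}{d_k},\frac{m_k+l_k}{d_k}\right]\cap\ZZ$ encodes $m_k$ up to the granularity $d_k$: if I know $P_k$ is nonempty and I know its extreme elements, I recover two-sided bounds on $m_k$. Conversely, if $P_k=\varnothing$, then $\frac{m_k+l_k}{d_k}-\frac{m_k}{d_k}=\frac{l_k}{d_k}<1$ is impossible unless $d_k>l_k$, and in that case emptiness still pins $m_k$ between two consecutive multiples of $d_k$ shifted appropriately.

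First I would treat the lower bound $m_k\ge\max\{-d_k,-l_k\}$. By Lemma~\ref{PkContained}, $P_k\subseteq[0,\sum_0^n s_i]$, so in particular $P_k$ contains no negative integer; equivalently $P_k>-1$ in the notation of the paper, i.e.\ $-1\le\frac{m_k}{d_k}$, giving $m_k\ge-d_k$. To get the possibly-stronger bound $m_k\ge-l_k$ when $l_k<d_k$, I would argue that if $-l_k>m_k$ then $-1\ge\frac{m_k+l_k}{d_k}$... wait, that gives $m_k\le-l_k-d_k$, too weak; instead, the right move is: if $m_k<-l_k$ then the integer $0$ satisfies $\frac{m_k}{d_k}<0\le\frac{m_k+l_k}{d_k}$ would fail since $\frac{m_k+l_k}{d_k}<0$, so actually $0\notin P_k$ automatically and $H^0(\PP^{l_k},m_k)=0$, $H^{l_k}(\PP^{l_k},m_k)\ne0$ by Proposition~\ref{Botts}(b) since $m_k<-l_k$. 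This means $0$ lies to the right of $P_k$, i.e.\ $P_k<0$, which by Lemma~\ref{PkContained} is forbidden unless... — so I would show $P_k<0$ contradicts $P_k\subseteq[0,\sum s_i]$ only when $P_k\ne\varnothing$, and when $P_k=\varnothing$ handle it via the equality case of Corollary~\ref{detCor}: if $P_k=\varnothing$ and $P_k<z_1$ then $\#P_k=l_k$ forces $l_k=0$, a degenerate case; otherwise $P_k<z_1$ is the relevant position and the integer $0\le z_1$ must lie in $\bigcup_{P_j<z_1}P_j$ — but if $0<P_j$ for all $j$ by the lower-bound argument just run on the nonempty ones, $0$ belongs to none, contradiction. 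The upper bound is handled symmetrically using the dual vector $\bm m'=\bm\rho-\bm m$ via Lemma~\ref{dualVecs}: $\bm m$ determinantal $\iff\bm m'$ determinantal, and $m'_k=\rho_k-m_k=d_k\sum_0^n s_i-l_k-1-m_k$, so the lower bound $m'_k\ge\max\{-d_k,-l_k\}$ translates into exactly $m_k\le d_k\sum_0^n s_i-1+\min\{d_k-l_k,0\}$ after substituting $\max\{-d_k,-l_k\}=-l_k-\min\{d_k-l_k,0\}$... let me recheck: $-\max\{-d_k,-l_k\}=\min\{d_k,l_k\}$, so $m_k\le d_k\sum_0^n s_i-l_k-1+\min\{d_k,l_k\}=d_k\sum_0^n s_i-1-(l_k-\min\{d_k,l_k\})=d_k\sum_0^n s_i-1-\max\{l_k-d_k,0\}=d_k\sum_0^n s_i-1+\min\{d_k-l_k,0\}$, which matches.

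The main obstacle is the case analysis around empty $P_k$: when $d_k\le l_k$ the interval $P_k$ is never empty and the argument is clean, but when $d_k>l_k$ one must rule out the possibility that $m_k$ sits just below $-\max\{d_k,l_k\}=-d_k$... no — just below $-l_k$, i.e.\ the band where $H^{l_k}$ becomes nonzero, without $P_k$ itself being violated. I expect to resolve this by noting that whenever $m_k<-l_k$ the cohomology $H^{l_k}(\PP^{l_k},m_k-0\cdot d_k)\ne0$, so $z=0\in Q$ contributes a summand with $j_k=l_k$ and hence $q(0)\ge l_k>0$, while also $0\in S_0$ gives $0\in Q_0$ with $p=0$, forcing $0-q(0)=-q(0)<0$ into $N$; but then the consecutive pair containing a value $\le-1$ and (from $z_2$) a value $\ge1$ cannot have $\#N=2$ unless $N\supseteq\{-1,0,1\}$, contradicting determinantality. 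The symmetric statement via duality then finishes the upper bound, so I would organize the proof as: (i) lower bound via Lemma~\ref{PkContained} plus the $q(0)$-argument, (ii) deduce upper bound by applying (i) to $\bm m'=\bm\rho-\bm m$ and Lemma~\ref{dualVecs}, (iii) simplify the resulting expression using $\min\{d_k,l_k\}-l_k=\min\{d_k-l_k,0\}$.
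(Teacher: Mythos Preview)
Your primary argument is correct and essentially matches the paper's approach for the lower bound. Both derive $m_k\ge-l_k$ by showing $P_k<0$ is impossible (the paper phrases this as ``by Lemma~\ref{det01} there are no $k$ such that $P_k<0$'', which unpacks exactly as you say: $P_k<0\le z_1$ together with Corollary~\ref{detCor} forces $\#P_k=l_k\ge1$, so $P_k$ would contain a negative integer, contradicting Lemma~\ref{PkContained}). Both get $m_k\ge-d_k$ from $\min P_k\ge0$ in the nonempty case. Where you genuinely differ is the upper bound: the paper simply repeats the analogous argument at the right endpoint $\sum_0^n s_i$, obtaining the four inequalities~(\ref{Ineq1}) and~(\ref{Ineq2}) in parallel and then taking $\max$/$\min$; you instead apply the already-proven lower bound to the dual vector $\bm m'=\bm\rho-\bm m$ via Lemma~\ref{dualVecs} and translate. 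Your duality shortcut is a legitimate alternative that halves the casework, and your final arithmetic $\min\{d_k,l_k\}-l_k=\min\{d_k-l_k,0\}$ is correct.

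One caution about the alternative $q(0)$-argument you sketch in your last paragraph: asserting $0\in Q_0$ requires $0\notin\bigcup_j P_j$, which fails whenever some other index $j\ne k$ satisfies $-l_j\le m_j<0$. In that situation $0\in P_j$, so $0\notin Q$ and no summand $K_{\nu,0}$ survives to witness a value $\nu<0$ in $N$. Since your Corollary~\ref{detCor} route already handles the empty-$P_k$ case cleanly, rely on that and drop the $q(0)$ alternative rather than trying to patch it.
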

\begin{proof}
  Observe that by Lemma~\ref{det01} there are no $k\in[1,r]$ such that
  $P_k<0$ or $P_k> \sum_0^n s_i$.  Combining this fact with
  Lemma~\ref{PkContained}, we get
  \begin{equation}
  m_k/d_k\geq -1 \ \text{ and }\ (m_k+l_k)/d_k< 1+\sum_0^n s_i \label{Ineq1}
  \end{equation}
  for all $k\in[1,r]$. Furthermore, the sets $P_k,\, k\in[1,r]$ can be
  partitioned into two (not necessarily non-empty) classes, by
  considering the integers $z_1,\, z_2$ of Lemma~\ref{det01}:
  \begin{itemize}
  \item $P_k<z_1$ or $P_k>z_2$, with cardinalities $\#P_k=l_k$.
  \item $ z_1< P_k < z_2 $, without cardinality restrictions (possibly
    empty).
  \end{itemize}
  Taking into account that
  $P_k=\left(\frac{m_k}{d_k},\frac{m_k+l_k}{d_k} \right]\cap\ZZ$ we
  get
  \begin{equation}
    (m_k+l_k)/d_k\geq 0\ \text{ and }\ m_k/d_k<\sum_0^n s_i \label{Ineq2}
  \end{equation}
  for all
  $k\in[1,r]$.
\end{proof}

Our implementation in Section~\ref{implement} conducts a search in the
box defined by the above bounds. For each $\bm m$ in the box, the
dimension of $K_2$ and $K_{-1}$ is calculated; if both are zero the
vector is determinantal. Finding these dimensions is time consuming;
the following lemma provides a cheap necessary condition to check
before calculating them.

\begin{lemma} \label{nessCond} If $\bm m\in\ZZ^r$ is determinantal then there
  exist indices $k,k'\in[1,r]$ such that $m_k<d_k(s_{n-1}+s_n)$ and
  $m_{k'}\geq d_{k'} \sum_0^{n-2} s_i - l_{k'}$.
\end{lemma}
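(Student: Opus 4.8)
The plan is to prove the two inequalities separately, deriving the second from the first by duality.

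\emph{First inequality.} Observe that $m_k < d_k(s_{n-1}+s_n)$ is exactly the negation of ``$P_k > s_{n-1}+s_n$'' (recall $P_k>u\iff u\le m_k/d_k$), so it suffices to show that $P_k>s_{n-1}+s_n$ cannot hold for every $k\in[1,r]$. I would argue by contradiction: if it did, then every element of every $P_k$ would be an integer strictly greater than $s_{n-1}+s_n$, so $s_{n-1}+s_n\notin\bigcup_{k=1}^r P_k$; since $s_{n-1}+s_n\in S_2$, this gives $s_{n-1}+s_n\in Q_2$. Moreover $q(s_{n-1}+s_n)=\sum_{P_k<s_{n-1}+s_n}l_k=0$, because $P_k>s_{n-1}+s_n$ precludes $P_k<s_{n-1}+s_n$. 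By Lemma~\ref{KnpNz}, $K_{2,2}\neq 0$, that is $2=2-q(s_{n-1}+s_n)$ lies in the set $N=\{p-q(z):z\in Q_p,\ p\in[0,n+1]\}$ of the proof of Lemma~\ref{det01}. But $\bm m$ determinantal forces $N=\{0,1\}$ there, a contradiction.

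\emph{Second inequality.} I would apply the first part to the dual vector $\bm m':=\bm\rho-\bm m$, which is determinantal by Lemma~\ref{dualVecs}; this yields an index $k'$ with $m'_{k'}<d_{k'}(s_{n-1}+s_n)$. Substituting $m'_{k'}=d_{k'}\sum_0^n s_i-l_{k'}-1-m_{k'}$ and rearranging gives $m_{k'}>d_{k'}\bigl(\sum_0^n s_i-s_{n-1}-s_n\bigr)-l_{k'}-1=d_{k'}\sum_0^{n-2}s_i-l_{k'}-1$, hence $m_{k'}\ge d_{k'}\sum_0^{n-2}s_i-l_{k'}$ since $m_{k'}\in\ZZ$. (Alternatively, this inequality has a direct proof mirroring the first one: if $P_k<\sum_0^{n-2}s_i$ held for all $k$, then $\sum_0^{n-2}s_i=s_0+\cdots+s_{n-2}$ would lie in $S_{n-1}\cap Q_{n-1}$ with $q$-value $\sum_k l_k=n$, forcing $-1=(n-1)-n\in N$, again impossible.)

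The arithmetic with the $P_k$-notation and with the dualization formula for $\bm m'$ is routine. The one point requiring care is the first step: one must check that $s_{n-1}+s_n$ is genuinely an available sum, i.e.\ $s_{n-1}+s_n\in S_2$ (immediate once $n\ge 1$, being a sum of two distinct coordinates of $\bm s$), and that the blanket assumption ``$P_k>s_{n-1}+s_n$ for all $k$'' simultaneously produces $s_{n-1}+s_n\notin\bigcup_k P_k$ \emph{and} $q(s_{n-1}+s_n)=0$, so that Lemma~\ref{KnpNz} really delivers $K_{2,2}\neq 0$ and hence $2\in N$. Granting that, the contradiction with $N=\{0,1\}$ and the passage to the dual are purely formal.
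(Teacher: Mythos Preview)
Your proof is correct and matches the paper's approach: both argue by contradiction, producing an element of $Q_2$ with $q$-value $0$ (giving $K_{2,2}\neq 0$) for the first inequality, and an element of $Q_{n-1}$ with $q$-value $n$ (giving $K_{-1,n-1}\neq 0$) for the second. The only difference is that the paper gives only the direct argument for the second inequality (your parenthetical alternative), whereas you first derive it from the first inequality via the duality $\bm m'=\bm\rho-\bm m$ of Lemma~\ref{dualVecs}; the duality route is a pleasant shortcut but buys nothing essential, since the direct argument is equally short.
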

\begin{proof} 
  If for all $k$, $m_k/d_k\geq s_0+s_1$ then $q(s_{n-1}+s_n)=0$
  by~(\ref{Qpsets}), so for $p=2$ we have $p-q(s_{n-1}+s_n)=2-0=2$
  which contradicts the fact that $m$ is determinantal. Similarly, if
  for all $k$, $(m_k+l_k)/d_k < \sum_0^{n-2} s_i \Rightarrow
  q\left(\sum_0^{n-2} s_i\right)=n$ and for $p=n-1$ we have $p -
  q\left(\sum_2^{n} s_i\right)=(n-1)-n=-1$, which is again infeasible.
\end{proof}

\subsection{Characterization and explicit vectors}

A formula is determinantal if and only if $K_2=K_{-1}=0$.
In this section we provide necessary and sufficient conditions for the data $\bm l,\bm d,\bm s$ to
admit a determinantal formula; we call this data
\emph{determinantal}. Also, we derive multidimensional integer
intervals (boxes) that yield determinantal formulae and conjecture
that minimum dimension formulae appear near the center of these
intervals. 

\begin{lemma}\label{disjointformula}
  If $\bm m\in\ZZ^r$ is a determinantal vector for the data $\bm l,\bm
  d,\bm s$, then this data admits a determinantal vector $\bm m'\in\ZZ^r$
  with $P_k\cap P_{k'}=\varnothing$ for all $k,k'\in[1,r]$.
\end{lemma}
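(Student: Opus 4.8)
The plan is to leave untouched those coordinates of $\bm m$ that produce the ``outer'' sets $P_k$ and to rearrange only the ``middle'' ones, which are the only possible source of overlap. Put $z_1=\min Q<z_2=\max Q$. Since $z_1,z_2\notin\bigcup_kP_k$ and each $P_k$ is a block of consecutive integers, every index falls into exactly one of the three classes $P_k<z_1$, $z_1<P_k<z_2$, $P_k>z_2$; let $B$ be the union of the first and third and $M$ the middle one. By Corollary~\ref{detCor} the sets $\{P_k\}_{k\in B}$ are pairwise disjoint with $\#P_k=l_k$; moreover each of them lies in $(-\infty,z_1)\cup(z_2,+\infty)$ while each $P_k$ with $k\in M$ lies in $(z_1,z_2)$, so a $P_k$ with $k\in B$ meets no other $P_{k'}$. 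Hence it suffices to change only the coordinates $m_k$, $k\in M$, so as to make $\{P_k\}_{k\in M}$ pairwise disjoint while keeping $\bm m$ determinantal.

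For the construction, enumerate $M=\{k_1,\dots,k_j\}$ and set $c_i:=\#P_{k_i}\le\lceil l_{k_i}/d_{k_i}\rceil$. Choosing $m'_{k_i}$ equal to $d_{k_i}t-1$ or $d_{k_i}t$ realizes as $P'_{k_i}$ any block of $\lceil l_{k_i}/d_{k_i}\rceil$, resp.\ $\lfloor l_{k_i}/d_{k_i}\rfloor$, consecutive integers starting at any prescribed position; in particular we can place a block of exactly $c_i$ consecutive integers wherever we like, or shrink it to $\lfloor l_{k_i}/d_{k_i}\rfloor$ if room is tight. We keep $m'_k:=m_k$ for $k\in B$ and pick the $m'_{k_i}$ so that $P'_{k_1},P'_{k_2},\dots$ are consecutive disjoint blocks stacked immediately above $z_1$. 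By construction these are pairwise disjoint, and disjoint from the unchanged $P'_k=P_k$, $k\in B$, by the first paragraph; hence all $P'_k$ are pairwise disjoint.

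It remains to verify that $\bm m'$ is still determinantal, i.e.\ that $N'=\{p-q'(z):z\in Q'_p,\ p\in[0,n+1]\}$ equals $\{0,1\}$; this is the crux. Below $z_1$ and above $z_2$ nothing has moved, so $z_1\in Q'$ with $q'(z_1)=q(z_1)=p_1$, which puts $0\in N'$; and as long as the stacked block remains inside $(z_1,z_2)$ — arranged, if necessary, by shrinking the $c_i$ to $\lfloor l_{k_i}/d_{k_i}\rfloor$ and using that $\bigcup_{k\in M}P_k$ already lies in $(z_1,z_2)$ — we still have $z_2\in Q'$ with $q'(z_2)=\sum_{k\in B,\,P_k<z_2}l_k+\sum_{k\in M}l_k=q(z_2)=p_2-1$, which puts $1\in N'$. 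The delicate step is to show that no sum $z\in S_p$ with $z_1<z<z_2$ that survives in $Q'$ acquires $p-q'(z)\notin\{0,1\}$: here one compares $q'(z)$ with $q(z)$ using that $q(\cdot)$ is nondecreasing and the elementary bounds $s_0+\dots+s_{p-1}\le z\le s_{n-p+1}+\dots+s_n$ for $z\in S_p$, noting that any sum the new block now covers simply leaves $Q$ and imposes no condition. Once $N'=\{0,1\}$ is established, Lemma~\ref{det01} shows $\bm m'$ is determinantal, with all $P'_k$ pairwise disjoint, as wanted. The main obstacle is precisely this verification — guaranteeing that re-stacking the middle sets neither produces a term $K'_{\nu,p}$ with $\nu\ge 2$ (from some $z$ whose $q'(z)$ became too small) nor swallows the unique ``$\nu=1$'' witness without providing a replacement — and the safest route is to fix the stacking order so that $q'$ stays sandwiched between suitable integer translates of $q$ on the surviving sums; alternatively one can recast the whole argument by choosing from the start a determinantal $\bm m$ that maximizes $\#\bigl(\bigcup_kP_k\bigr)$ and showing directly that such a vector can have no overlaps.
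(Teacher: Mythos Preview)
Your write-up is honest about its own weakness: the verification that the re-stacked $\bm m'$ remains determinantal is only sketched, and this is a genuine gap. Concretely, for a sum $z\in Q'$ lying above your stacked block but below $z_2$, you have $q'(z)=p_1+\sum_{k\in M}l_k$, whereas the original $q(z)$ could be strictly smaller (some middle $P_k$ may not have been past $z$). You argue $p-q'(z)\le 1$, which is fine, but you never establish $p-q'(z)\ge 0$; nothing in your sketch rules out $K'_{-1}\ne 0$. The proposed fixes (``shrink $c_i$ to $\lfloor l_{k_i}/d_{k_i}\rfloor$'', ``sandwich $q'$ between translates of $q$'') are not carried out, and the shrinking is bounded by $1$ per block, which need not suffice to keep the stack inside $(z_1,z_2)$.

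The paper avoids this difficulty by a much simpler one-step-at-a-time device. Take a single overlapping pair $P_i,P_j$ with $m_i/d_i\le m_j/d_j$ and replace $m_j$ by $m_j+td_j$ for the \emph{minimal} $t$ separating $P_j'$ from $P_i$. Minimality gives two things at once: $P_i\cup P_j\subseteq P_i\cup P_j'$, so $\bigcup_kP_k$ strictly grows and $Q'\subseteq Q$; and there is no integer gap between $P_i$ and $P_j'$, so every integer at which the value of $q$ could change is absorbed into $P_i\cup P_j'$ and hence leaves $Q$. Thus $q'=q$ on $Q'$, no new nonzero $K_{\nu,p}$ can appear, and determinantality is preserved. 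Iterate until all pairs are disjoint. Your closing alternative---pick a determinantal $\bm m$ maximizing $\#\bigl(\bigcup_kP_k\bigr)$ and argue it can have no overlap---is essentially the contrapositive of this same idea and would give a clean proof; it is much closer to the paper's route than your main re-stacking argument.
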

\begin{proof} 
  Suppose $m_i/d_i\le m_j/d_j$.
  Let $P_i(\bm m)\cap P_{j}(\bm m)=[u,v] \subset\ZZ$. Set $m'_j= m_j + t d_j$
  where $t\in\ZZ$ is the minimum
  shift so that $P_i(\bm m')\cap P_{j}(\bm m')=\varnothing$ and $P_j(\bm m')$
  satisfies Theorem~\ref{mBounds}.  For all $k\ne j$, let $m'_k=m_k$.

Any vector in $\ZZ^r$ defines a nontrivial complex, since $Q\ne \varnothing$.
In particular, $\bm m'$ is determinantal because $P(\bm m)\subseteq P(\bm m')$, i.e.\ no new terms are introduced,
but possibly some terms vanish.
Repeat until all $P_k\cap P_{k'}=\varnothing$.
\end{proof}

Let $\sigma:[1,r]\to [1,r]$ be any permutation. One can identify at
most $r!$ classes of determinantal complexes, indexed by the
permutations of $\{1,\dots,r\}$. This classification arises if we look
at the nonzero terms that can occur in the complex, provided that the
sets $P_k$ satisfy
$$
P_{\sigma(1)} \leq P_{\sigma(2)} \leq \cdots \leq P_{\sigma(r)}
$$
where we set $P_i \leq P_j \ \iff \ m_i/d_i \leq m_j/d_j $.
Any given $\bm m$ defines these sets, as well as an ordering between them.
This fact allows us to classify determinantal $\bm m$-vectors and the
underlying complexes.

For this configuration, expressed by $\sigma$, the only nonzero
summands of $K_\nu$ can be $K_{\nu,\nu+q}$ where $q$ takes values in
the set $\{0,l_{\sigma(1)}, l_{\sigma(1)}+l_{\sigma(2)},\dots, n \}$.
To see this, observe that $q(z)=\sum_{P_k<z} l_k$, $z\in \NN$ cannot
attain more than $r+1$ distinct values; so if the relative ordering of
the $P_k$ is fixed as above, then these are the only possible values
of $q$.  This leads us to the following description of $K_2$ and
$K_{-1}$:
\begin{equation}\label{Ksigma}
K_2^{\sigma}= \bigoplus_{k=1}^r K_{2,2+\sum_{i=1}^{k-1} l_{\sigma(i)}}
\ \ , \ \ K_{-1}^{\sigma}= \bigoplus_{k=1}^r K_{-1,-1+\sum_{i=1}^{k}
  l_{\sigma(i)}}
\end{equation}
As a side remark, note that the proof of Lemma~\ref{dualVecs} implies
that the dual of $K_\nu^{\sigma}(\bm m)$ is $K_{1-\nu}^{\tau}(\bm
\rho-\bm m)$ where $\tau$ is the permutation s.t.\ $\tau(i):= r+1 -
\sigma(i)$.

Let $ \pi[k]:= \sum_{\pi(i)\leq \pi(k)} l_i $. If $\pi=\text{Id}$ this
is $\text{Id}[k]=l_1+\cdots+l_k$.  We now characterize determinantal
data:
\begin{theorem}\label{Hdeter}
  The data $\bm l,\bm d,\bm s$ admit a determinantal formula if and only if
  there exists $\pi:[1,r]\to [1,r]$ s.t.
  $$ 
  d_k \sum_{n- \pi[k]+2}^n s_i - l_k < d_k \sum_0^{\pi[k-1]+1} s_i ,\
  \forall k .
  $$
\end{theorem}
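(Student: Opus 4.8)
The plan is to characterize when a determinantal $\bm m$-vector exists by reducing, via Lemma~\ref{disjointformula}, to the case where all the sets $P_k$ are pairwise disjoint, and then to translate the condition ``$K_2 = K_{-1} = 0$'' into the combinatorial inequality on $\bm s, \bm d, \bm l$ stated above. First I would fix a permutation $\sigma$ and work within the corresponding class: by the discussion preceding the theorem, a determinantal vector in this class has $P_{\sigma(1)} \le \cdots \le P_{\sigma(r)}$, and the only summands that can appear in $K_2$ and $K_{-1}$ are the ones listed in~(\ref{Ksigma}). Relabeling so that $\sigma = \pi$ plays the role of the relevant permutation (and using $\pi[k] = \sum_{\pi(i) \le \pi(k)} l_i$), the requirement $K_2^{\pi} = 0$ says that for each $k$, the term $K_{2, 2 + \pi[k-1]}$ vanishes, and similarly $K_{-1}^{\pi} = 0$ says each $K_{-1, -1 + \pi[k]}$ vanishes. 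By Lemma~\ref{KnpNz}, $K_{\nu,p} = 0$ is equivalent to $\nu \notin \{p - q(z) : z \in Q_p\}$, so I must unwind what this means for $p = 2 + \pi[k-1]$, $\nu = 2$ and for $p = -1 + \pi[k]$, $\nu = -1$.

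Next I would make the two vanishing conditions concrete. For $K_{2, 2+\pi[k-1]}$: the relevant sums $z \in S_p$ with $p = 2 + \pi[k-1]$ range over sums of $p$ of the $s_i$; the \emph{smallest} such sum is $s_0 + \cdots + s_{p-1}$ and, for the term to be nonzero with $\nu = 2$, we would need some such $z \in Q_p$ with $q(z) = p - 2 = \pi[k-1]$. The ordering of the $P_k$ forces $q$ to take only the values $0, l_{\pi^{-1}(1)}, \dots$; the value $\pi[k-1]$ is attained exactly on the ``gap'' between $P$'s of $\pi$-rank $k-1$ and $k$, i.e.\ for $z$ with $\max_{\pi(i) \le \pi(k-1)} (m_i+l_i)/d_i < z \le \min_{\pi(i) \ge \pi(k)} m_i/d_i$. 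So $K_{2, 2+\pi[k-1]} \neq 0$ iff this interval meets $S_{2+\pi[k-1]}$, and the cleanest way to rule that out is to compare the smallest element of $S_{2+\pi[k-1]}$, namely $\sum_0^{\pi[k-1]+1} s_i$, with an upper endpoint: one shows that genericity of the construction (together with Theorem~\ref{mBounds} / Corollary~\ref{detCor}) lets one take $m_k/d_k$ essentially as large as $\sum_0^{\pi[k-1]+1} s_i$, so that the obstruction disappears precisely when $d_k \sum_0^{\pi[k-1]+1} s_i$ is large enough. Dually, for $K_{-1, -1+\pi[k]}$ with $p = -1 + \pi[k]$, the \emph{largest} sum in $S_p$ is $s_n + s_{n-1} + \cdots + s_{n-p+1} = \sum_{n - \pi[k] + 2}^{n} s_i$, and a nonzero term with $\nu = -1$ would need $q(z) = p + 1 = \pi[k]$ for such $z$; ruling this out forces $d_k \sum_{n-\pi[k]+2}^n s_i - l_k$ to be \emph{below} the threshold. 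Putting the two sides together gives exactly
$$
d_k \sum_{n - \pi[k]+2}^n s_i - l_k < d_k \sum_0^{\pi[k-1]+1} s_i, \quad \forall k.
$$

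For the converse, given $\pi$ satisfying the displayed inequalities, I would construct an explicit determinantal $\bm m$: for each $k$ choose $m_k$ so that $m_k/d_k$ sits in the (nonempty, by the inequality) window between $\sum_{n-\pi[k]+2}^n s_i - l_k/d_k$ and $\sum_0^{\pi[k-1]+1} s_i$, arranging the resulting $P_k$ to be disjoint and ordered according to $\pi$; then verify via Lemma~\ref{KnpNz} and~(\ref{Ksigma}) that all summands of $K_2$ and $K_{-1}$ vanish, so $\bm m$ is determinantal. The main obstacle I anticipate is the bookkeeping around $q(z)$ and the sets $Q_p$: I have to be careful that when I pick the smallest element $\sum_0^{p-1} s_i$ of $S_p$ (or the largest for the dual side) I am not accidentally hitting one of the $P_k$'s for a different reason, i.e.\ that the only way $K_{2,2+\pi[k-1]}$ can be nonzero is the one I analyzed — this requires knowing that every $z \in S_p$ that lands in the relevant $q$-level also lands outside all $P_k$, which is where disjointness of the $P_k$ (Lemma~\ref{disjointformula}) and the cardinality equalities of Corollary~\ref{detCor} do the real work. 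A secondary subtlety is handling the boundary cases $k=1$ (where $\pi[k-1] = 0$, $S_p = S_1$) and $k = r$ (where $\pi[k] = n$), which I would check separately to make sure the inequalities still say something meaningful, e.g.\ reducing to Lemma~\ref{nessCond} in those extremes.
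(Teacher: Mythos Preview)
Your plan matches the paper's proof: for $(\Rightarrow)$ reduce via Lemma~\ref{disjointformula} to disjoint ordered $P_k$, identify the extremal elements $R_k=\max S_{\pi[k]-1}=\sum_{n-\pi[k]+2}^n s_i$ and $L_k=\min S_{\pi[k-1]+2}=\sum_0^{\pi[k-1]+1} s_i$, and use~(\ref{Ksigma}) with Lemma~\ref{KnpNz} to force $d_kR_k-l_k\le m_k<d_kL_k$; for $(\Leftarrow)$ pick any $m_k$ in that box and verify $K_{2,2+\pi[k-1]}=K_{-1,-1+\pi[k]}=0$. The one point you underestimate is that in $(\Leftarrow)$ the paper does \emph{not} arrange the $P_k$ disjoint and ordered (this is not obviously achievable within the box), but instead verifies the vanishing directly by a case split: for $l_k\ge 3$ one gets $S_{\pi[k-1]+2}\cup S_{\pi[k]-1}\subseteq P_k$ outright, while for $l_k\le 2$ one squeezes $q(z)$ to equal $\pi[k-1]$ using the box bounds for the neighbouring index $k-1$, with a further subcase when $l_k=l_{k-1}=1$ --- this is precisely the ``bookkeeping around $q(z)$'' you flagged as the main obstacle.
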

\begin{proof}
  We assume without loss of generality that $\pi=\id$.  This is not
  restrictive, since if $\pi\ne\id$ then we can re-number the variable
  groups such that $k':=\pi^{-1}(k)$. Hence if we set
  $L_k:=\sum_0^{\id[k-1]+1}s_i= \min S_{\id[k-1]+2}$ and $R_k:=
  \sum_{n- \id[k]+2}^n s_i=\max S_{\id[k]-1}$ then the relations become:
  $$
  d_k R_k - l_k < d_k L_k ,\ \forall k .
  $$
  Throughout this proof, whenever we use non-positive indices $j\le 0$
  for $l_{j}$ or $\id[j]$, these quantities will be zero, and the
  results in this case are straightforward to verify.  Also, note that
  the dual complex is given by the ``reversed'' permutation, and in
  particular, $K_{-1}^* \simeq K_2$, therefore any results on the
  nullity of $K_{-1}$ can be directly used to prove the nullity of $K_{2}$.

  ($\Leftarrow$) Assume that the inequalities hold. Then for all $k$
  there exists an integer $m_k$ such that
  \begin{equation}\label{mbox}
  d_k R_k - l_k \le  m_k  \le d_k L_k - 1 . 
  \end{equation}
  Let $\bm m=(m_1,\dots,m_r)$. We shall prove that this vector gives a
  determinantal formula; it suffices to show that for all $k\in[1,r]$,
  $K_{2,2+\id[k-1]}=K_{-1,-1+\id[k]}=0$, since in~(\ref{Ksigma}) we have
  $\sigma=\pi^{-1}=\id$.
  
  \textbullet\ If $l_k\ge 3$, we have 
  \begin{equation}\label{cardinality}
    \id[k]-1=\id[k-1]+l_k-1\geq \id[k-1]+2 . 
  \end{equation}

  Thus $L_k \leq R_k$, since $s_i\ge 1$; also our
  hypothesis~(\ref{mbox}) translates into the inclusion
  \begin{equation}\label{Hcase1}
    \big[ L_k  , R_k \big] \subseteq \left(\frac{m_k}{d_k}, 
    \frac{m_k+l_k}{d_k} \right] = P_k   .
  \end{equation}
  Now, by~(\ref{cardinality}) we derive 
  $$
  \min S_{\id[k-1]+2} = L_k 
  \le  \min S_{\id[k]-1}
  \ \ \text{and}\ \ \
  \max S_{\id[k]-1} = R_k
  \ge \max S_{\id[k-1]+2}
  $$
  so~(\ref{Hcase1}) implies $ S_{\id[k-1]+2} \subseteq P_k$ as well as
  $S_{\id[k]-1}\subseteq P_k$ and thus $K_{2,2+ \id[k-1]}=0$, $
  K_{-1,-1+\id[k]}=0$ by Proposition~\ref{Botts}.

  \textbullet\ If $l_k \le 2$, it is 
   $\id[k]-1=\id[k-1]+l_k-1\leq \id[k-1]+1$.
  In this case we will prove $K_{2,2+ \id[k-1]}=0$, $
  K_{-1,-1+\id[k]}=0$ using Lemma~\ref{KnpNz}.

  Let $z\in Q_p$ for $p={-1+\id[k]}$. From $R_k\leq(m_k+l_k)/d_k$ it
  is clear that $P_k \not< z$, thus $q(z)\leq \id[k-1]$. Also,
  $$
  \id[k-2]+2= \id[k]-l_{k-1} -l_k + 2
  \leq \id[k]-1 . 
  $$
  where the last inequality is taken under the assumption
  $\max\{l_k,l_{k-1}\}\ge 2$. We treat the case $l_k=l_{k-1}=1$ seperately. Hence
  $$
  z \ge
 \min S_{-1+\id[k]} \ge \min
  S_{2+\id[k-2]}= L_{k-1} > m_{k-1}/d_{k-1}      . 
  $$  
  This implies $P_{k-1}<z \Rightarrow q(z)\geq \id[k-1]$. We
  conclude that $q(z)= \id[k-1]$ and thus
  \begin{equation}\label{Hcase2a}
    p-q(z) =(-1+\id[k])-\id[k-1]=-1+l_{k} \in [0,1] .
  \end{equation}
  By Lemma~\ref{KnpNz} we see that $K_{-1,-1+\id[k]}=0$, since
  $p-q(z)\ne -1$. 

  To complete the proof, suppose $l_k=l_{k-1}=1$ and
  $p=\id[k]-1$. We get $\id[k]-1=\id[k-2]+1$
  and therefore $z > P_{k-2} \Rightarrow q(z)\geq
  \id[k-2]$. Recall that $q(z)$ is also upper bounded by $\id[k-1]$.
  We derive that for $z\in Q_{p}$ it holds $p-1\le q(z)\le p$, therefore $p - q(z)\in
  [0,1]$, and again by Lemma~\ref{KnpNz}, $K_{-1,-1+\id[k]}=0$.

  As already pointed out, by using duality one can see that, for $z'\in
  Q_{2+\id[k-1]}$, it holds $p-q(z')\ne 2$, therefore $K_{2,2+\id[k-1]}=0$.

  ($\Rightarrow$) Suppose that $\bm m\in\ZZ^r$ is determinantal,
  namely $K_2(\bm m)=K_{-1}(\bm m)=0$.  
  Lemma~\ref{disjointformula} implies that we may assume the
  sets $P_j$ are pairwise disjoint.  By a permutation of the
  variable groups we also assume that the $P_j$ sets induced by $\bm m$
  satisfy
  $$
  P_{1} \leq P_{2} \leq \cdots \leq P_{r} .
  $$
  The sets $P_j$ have to be distributed along $I:=[0, \sum_0^n s_i ]$
  (Lemma \ref{PkContained}) and the connected components of
  $I\setminus \cup P_j$ are subsets of $S_p\cup S_{p+1}$,
  $p\in[0,n+1]$ since they define a determinantal complex. In
  particular, for $p=\id[k]-1$ we get $P_{k-1} \le S_p\cup S_{p+1} \le
  P_{k+1}$. Now the definition of $R_k$ as an element of
  $S_{\id[k]-1}$ implies $P_{k-1}<R_k<P_{k+1}$, i.e.  we have the
  implications (similarly for $L_k$):
  \begin{align}\label{impl1}
  R_k\in \cup_1^rP_j \ \Longrightarrow \ R_k\in P_k  
  \ \ \text{ and }\ \ 
  L_k\in \cup_1^rP_j \ \Longrightarrow \ L_k\in P_k  .
  \end{align}
 
  Suppose $m_k<d_kR_k-l_k$, or equivalently
  $\ds\frac{m_k+l_k}{d_k}<R_k$. Then $R_k\notin P_k$, hence by
  (\ref{impl1}) we must have $R_k\notin \cup_j P_j$ which leads to $z=R_k\in
  Q_p$, $p={\id[k]-1}$. This implies
  $$
  q(z)\geq\id[k]\ \ \Rightarrow \ \ p-q(z) \leq \id[k]-1-\id[k]=-1 \ \Rightarrow \ K_{-1}\ne 0,
  $$
  which is a contradiction.
  In the same spirit, if $m_k\geq d_kL_k$, we are led to $z'=L_k\in Q_p$, $p={\id[k-1]+2}$, then
  $$
  q(z')\leq\id[k-1]\ \ \Rightarrow \ \ p-q(z') \geq p -\id[k-1]=2 \ \Rightarrow \ K_{2}\ne 0,
  $$
  which again contradicts our hypothesis on $\bm m$.

  We conclude that any coordinate $m_k$ of $\bm m$ must satisfy $d_k
  R_k - l_k \le m_k < d_k L_k$, hence the existence of $\bm m$
  implies the inequality relations we had to prove.
\end{proof}

\begin{corollary}\label{DetBoxes}
  For any permutation $\pi:[1,r]\to [1,r]$, the vectors $\bm m\in\ZZ^r$
  contained in the box
  $$ 
  d_k \sum_{n- \pi[k]+2}^n s_i - l_k \leq m_k \leq d_k
  \sum_0^{\pi[k-1]+1} s_i - 1
  $$
  for $k=1,\dots, r$ are determinantal.
\end{corollary}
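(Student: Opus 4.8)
The plan is to obtain this from the sufficiency ($\Leftarrow$) half of Theorem~\ref{Hdeter}, upgraded from ``there is a determinantal $\bm m$'' to ``every $\bm m$ in the box is determinantal''. First I would dispose of the permutation: given an arbitrary $\pi$, relabel the $r$ variable groups so that the $\pi$-ordering becomes the natural ordering, exactly as in the opening of the proof of Theorem~\ref{Hdeter}. A relabeling is an isomorphism of the Weyman complex, so it preserves being determinantal, and under it the displayed box turns into the product of intervals $d_kR_k-l_k\le m_k\le d_kL_k-1$, $k=1,\dots,r$, with $R_k:=\sum_{n-\id[k]+2}^{n}s_i=\max S_{\id[k]-1}$ and $L_k:=\sum_{0}^{\id[k-1]+1}s_i=\min S_{\id[k-1]+2}$. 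So it suffices to treat $\pi=\id$.

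For $\pi=\id$: if the box is empty there is nothing to prove; otherwise fix $\bm m$ in it. Since a formula is determinantal precisely when $K_2=K_{-1}=0$ and, by Lemma~\ref{KnpNz}, $K_\nu\ne 0\iff\nu\in N:=\{p-q(z):z\in Q_p,\ p\in[0,n+1]\}$, it is enough to show $N\subseteq\{0,1\}$, i.e.\ $p-1\le q(z)\le p$ for all $p$ and all $z\in Q_p$. I would prove this by propagating the box inequalities through the definitions of $P_k$ and $q$. For the upper bound: if $P_j<z$ then $z>(m_j+l_j)/d_j\ge R_j=\max S_{\id[j]-1}$, which together with $z\in S_p$ forces $p\ge\id[j]$; applying this to $j_1:=\max\{j:P_j<z\}$ and noting that every $j$ with $P_j<z$ lies in $[1,j_1]$ gives $q(z)=\sum_{P_j<z}l_j\le\id[j_1]\le p$. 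Dually, if $P_j\not<z$ then $z\notin P_j$ forces $P_j>z$, so $z\le m_j/d_j<L_j=\min S_{\id[j-1]+2}$, whence $p\le\id[j-1]+1$; since every index below $j_2:=\min\{j:P_j>z\}$ satisfies $P_j<z$, we get $q(z)\ge\id[j_2-1]\ge p-1$. The two degenerate cases (no $P_j$ below $z$, resp.\ none above) are covered by the same estimates. Hence $N\subseteq\{0,1\}$ and $\bm m$ is determinantal.

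The one point that genuinely needs care, and hence the main obstacle, is the \emph{uniformity} of this argument over the whole box: the case analysis inside the proof of Theorem~\ref{Hdeter} invoked (\ref{Ksigma}) with $\sigma=\id$, which tacitly assumes $P_1\le\cdots\le P_r$, and that ordering can fail for some $\bm m$ in the box. The estimates above sidestep (\ref{Ksigma}) entirely and use only that $R_k,L_k$ are non-decreasing in $k$ (because $\id[k]$ is) and that $\min S_p,\max S_p$ are monotone in $p$, so the relative order of the $P_k$ never enters. Everything else — in particular matching $\pi[k]$ and $\pi[k-1]$ of the statement with their $\id[\cdot]$ counterparts after the relabeling — is routine bookkeeping.
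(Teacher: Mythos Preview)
Your proof is correct. In the paper the Corollary carries no separate proof: it is meant to be read off the $(\Leftarrow)$ half of Theorem~\ref{Hdeter}, where an arbitrary $\bm m$ satisfying~(\ref{mbox}) is shown to be determinantal. So your overall plan --- reduce to $\pi=\id$ by relabelling, then argue that every $\bm m$ in the box of~(\ref{mbox}) kills $K_2$ and $K_{-1}$ --- is exactly the intended one.

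Where you diverge from the paper is in the execution of that second step, and the divergence is an improvement. The paper's argument invokes the decomposition~(\ref{Ksigma}) with $\sigma=\id$ and then splits into the cases $l_k\ge 3$ and $l_k\le 2$ (with a further sub-case $l_k=l_{k-1}=1$), checking the relevant $K_{2,2+\id[k-1]}$ and $K_{-1,-1+\id[k]}$ one by one. Your argument instead bounds $q(z)$ directly for every $z\in Q_p$: from $(m_{j}+l_{j})/d_{j}\ge R_{j}$ you get $q(z)\le\id[j_1]\le p$, and from $m_{j}/d_{j}<L_{j}$ you get $q(z)\ge\id[j_2-1]\ge p-1$, so $N\subseteq\{0,1\}$ and you are done without any case split.

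The concern you raise about~(\ref{Ksigma}) is genuine: the decomposition there is valid only under the ordering $P_{\sigma(1)}\le\cdots\le P_{\sigma(r)}$, and the box condition does \emph{not} force $P_1\le\cdots\le P_r$ --- already for $\bm l=(1,2)$, $\bm d=(1,1)$, $\bm s=\boldsymbol 1$ the point $\bm m=(1,0)$ lies in the $\id$-box but has $m_1/d_1>m_2/d_2$. Your estimates avoid this entirely, since they use only the max/min definitions $j_1=\max\{j:P_j<z\}$, $j_2=\min\{j:P_j>z\}$ together with the trivial inclusions $\{j:P_j<z\}\subseteq[1,j_1]$ and $[1,j_2-1]\subseteq\{j:P_j<z\}$, never the relative order of the $P_k$. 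One small remark: the monotonicity of $R_k$ and $L_k$ in $k$ that you mention in the last paragraph is true but not actually used; what your estimates need is only the monotonicity of $\min S_p$ and $\max S_p$ in $p$.
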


It would be good to have a characterization that does not depend on
the permutations of $[1,r]$; this would further reduce the time needed
to check if some given data is determinantal. One can see that if
$r\leq 2$ an equivalent condition is $\ds d_k \sum_{n-l_k+2}^n s_i -
l_k < d_k ( s_0 + s_1 )$ for all
$k\in[1,r]$; see~\cite[Lem.5.3]{DADi01} for the case $r=1$.  It turns
out that for any $r\in\NN$ this condition is \emph{necessary} for the
existence of determinantal vectors, but not always sufficient: the
smallest counterexample is $\bm l=(1,2,2),\, \bm d=(1,1,1),\,
\bm s=(1,1,1,1,2,3)$: this data is not determinantal, although the
condition holds. In our implementation this condition is used as a
filter when checking if some data is determinantal. Also,
\cite[Cor.5.5]{DADi01} applies coordinate-wise: if for some $k$,
$l_k\geq 7$ then a determinantal formula cannot possibly exist unless
$d_k=1$ and all the $s_i$'s equal $1$, or at most, $s_{n-1}=s_{n}=2$,
or all of them equal $1$ except $s_n=3$.

We deduce that there exist at most $r!$ boxes, defined by the above
inequalities that consist of determinantal vectors, or at most $r!/2$
matrices up to transpose. One can find examples of data with any even
number of nonempty boxes, but by Theorem~\ref{Hdeter} there exists at
least one that is nonempty.

If $r=1$ then a minimum dimension formula lies in the center of an
interval~\cite{DADi01}.  We conjecture that a similar explicit choice
also exists for $r>1$.  Experimental results indicate that minimum
dimension formulae tend to appear near the \emph{center} of the
nonempty boxes:
\begin{conjecture} \label{concenters} If the data $\bm l,\bm d,\bm s$
  is determinantal then determinantal degree vectors of minimum matrix
  dimension lie \emph{close} to the center of the nonempty boxes of
  Corollary~\ref{DetBoxes}.
\end{conjecture}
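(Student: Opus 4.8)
The plan is to reduce the conjecture to an analytic statement about a single explicit function. For $\bm m$ in a nonempty box $B_\pi$ of Corollary~\ref{DetBoxes} the resultant matrix is $\delta_1\colon K_1(\bm m)\to K_0(\bm m)$, square of size $D(\bm m):=\dim K_0(\bm m)=\dim K_1(\bm m)$ by Lemmas~\ref{det01} and~\ref{dualVecs}, and combining the description of the terms following Lemma~\ref{qth_Coh} with Bott's formula (Proposition~\ref{Botts}) gives
\[
\dim K_0(\bm m)=\sum_{(p,z)}\ \prod_{k=1}^{r}g_k(m_k,z),\qquad
g_k(m_k,z)=\begin{cases}\binom{m_k-zd_k+l_k}{l_k}&\text{if }P_k>z,\\ \binom{zd_k-m_k-1}{l_k}&\text{if }P_k<z,\end{cases}
\]
the outer sum being over the pairs with $z\in Q_p$ and $q(z)=p$. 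The first step is to make this dependence fully explicit once the order $\sigma=\pi^{-1}$ of the $P_k$ is fixed: each coordinate $m_k$ enters every summand through a single factor $g_k$, convex and \emph{increasing} in $m_k$ when $z$ lies below $P_k$, convex and \emph{decreasing} when $z$ lies above $P_k$. On a sub-box of $B_\pi$ the list of admissible $(p,z)$ is constant --- the ``stable range'' already isolated in the proof of Theorem~\ref{Hdeter}, e.g.\ $S_{\id[k-1]+2}\subseteq P_k$ when $l_k\ge3$ --- and the few boundary strata, where some $z$ enters or leaves a $P_k$, are to be treated separately and shown to add only lower-order corrections.

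The second step uses duality. By Lemma~\ref{dualVecs} together with $S_{n+1-p}=\{\sum_0^n s_i - z : z\in S_p\}$ one has $\dim K_0(\bm m)=\dim K_1(\bm\rho-\bm m)=\dim K_0(\bm\rho-\bm m)$, so $D$ is invariant under the involution $\bm m\mapsto\bm\rho-\bm m$, which carries $B_\pi$ onto the dual box $B_\tau$ with $\tau(i)=r+1-\sigma(i)$ and sends its center to the center of $B_\tau$. Hence it suffices to minimize $D$ over one box of each dual pair, and the conjecture becomes the assertion that $D$ is coordinatewise unimodal on $B_\pi$ with valley near the center $\bm c$. To locate the valley I would fix all coordinates but $m_k$ at $\bm c$, split $D$ into its increasing part (from $z<P_k$), decreasing part (from $z>P_k$), and regime-switching remainder, and show that the forward difference of $D$ in $m_k$ changes sign exactly once and vanishes up to lower order at $\bm c$, by balancing the extreme $z$-terms of the increasing and decreasing parts. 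A purely asymptotic version --- scale the $s_i$ to infinity so that the binomials become monomials and the inner sums become integrals --- should be provable by a calculus minimization whose minimizer is exactly the center, with the approximation error accounting for the word ``close''.

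The main obstacle is unimodality itself: $D$ is a sum of \emph{products} of binomial coefficients, products of log-concave sequences are log-concave but not convex, and sums of unimodal sequences need not be unimodal, so no term-by-term argument suffices. One must instead use that, for the admissible $z$, the increasing factors $g_k(\cdot,z)$ form a combinatorial interval of translates of the same binomial, and prove that their aggregate weight is overtaken by that of the decreasing block monotonically as $m_k$ sweeps $B_\pi$. Note that the naive reflection of $B_\pi$ onto itself does \emph{not} preserve $\dim K_0$: the $z$-reflection matching the factors $g_k$ is the $k$-dependent map $z\mapsto R_k+L_k-z$, so no exact symmetry lives inside a single box unless all the sums $R_k+L_k$ coincide, which happens only for $r=1$. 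That case is \cite{DADi01}; an induction on $r$ by peeling off one variable group is natural, but the coupling through the product over $k$ makes the inductive step delicate, which is why the statement is left as a conjecture.
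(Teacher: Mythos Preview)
The paper does not prove this statement: it is explicitly labeled a \emph{conjecture}, supported only by experimental evidence and by the $r=1$ case established in~\cite{DADi01}. There is therefore no proof in the paper to compare your proposal against.

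Your proposal is not a proof either, and you say as much in your final paragraph. What you have written is a reasonable analytic reformulation --- express $\dim K_0(\bm m)$ as a sum of products of binomials via Proposition~\ref{Botts}, exploit the duality $\bm m\mapsto\bm\rho-\bm m$ from Lemma~\ref{dualVecs}, and try to show coordinatewise unimodality with valley near the box center --- together with an honest list of the obstructions: sums of unimodal sequences need not be unimodal, no exact reflection symmetry lives inside a single box when $r>1$, and the product structure couples the coordinates in a way that blocks a clean induction on $r$. These are exactly the reasons the authors leave the statement open. Your asymptotic heuristic (scale the $s_i$ and replace binomials by monomials) is plausible as a route to a weak version, but as stated it is a plan, not an argument: you have not carried out the forward-difference sign analysis, controlled the boundary strata where the admissible $(p,z)$ change, or bounded the error in the word ``close''. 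In short, your write-up and the paper agree that this is an open problem; neither supplies a proof.
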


We conclude this section by treating the homogeneous case, as an example.

\begin{example} \label{dandrea} The case $r=1$, arbitrary degree, has
  been studied in \cite{DADi01}. We shall formulate the problem in our
  setting and provide independent proofs.  Let $n,d\in\ZZ$,
  $\bm s\in\ZZ^{n+1}_{>0}$. This data define a scaled homogeneous system
  in $\PP^n$; given $m\in\ZZ$, we obtain $P=\left( \frac m d,
    \frac{m+n}{d}\right]\cap\ZZ$. In this case there exist only zero
  and $n$th cohomologies; zero cohomologies can exist only for
  $\nu\geq 0$ and $n$th cohomologies can exist only for $\nu\leq
  1$. Thus in principle both of them exist for $\nu\in\{0,1\}$. Hence,
  $$
    K_\nu=\left\{\begin{array}{ll}
    K_{\nu,\nu}, &\ \ \ 1<\nu\leq n+1 \\
    K_{\nu,\nu}\oplus K_{\nu,n+\nu}, &\ \ \ 0\leq \nu \leq 1\\
    K_{\nu,n+\nu}, &\ \ \ -n\leq\nu<0
  \end{array}\right.  \ ,
  $$
i.e.\ the complex is of the form:
   $$ 0\to K_{n+1,n+1}\to \dots\to K_{1,1}\oplus K_{1,n+1}\to
   K_{0,0}\oplus K_{0,n}\to K_{-1,n-1}\to\cdots\to K_{-n,0}\to 0 $$
  We can explicitly give all determinantal integers in this case:
  $$
  K_{2}=0\iff K_{2,2}=0\iff Q_2=\varnothing \iff S_2\subseteq P , 
  $$
  thus
  $$
  \min S_2>\frac m d \iff s_0+s_1>\frac m d \iff m< (s_0+s_1)d.
  $$
  Similarly $K_{-1}=0 \iff Q_{n-1}=\varnothing \iff S_{n-1}\subseteq P$
  and thus
  $$
  \max S_{n-1}\leq \frac{m+n}{d} \iff m \ge d\sum_{i=2}^n s_i -n .
  $$
  Consequently, a determinantal formula exists iff $d \sum_2^n s_i -n
  < (s_0+s_1)d$, also verified by Theorem~\ref{Hdeter}.
  In this case the integers contained in the interval 
 $$
 \Big(d\sum_{i=2}^n s_i -n-1\ ,\ d(s_0+s_1)\Big)
 $$
 are the only determinantal vectors, also verifying
 Corollary~\ref{DetBoxes}.  Notice that the sum of the two endpoints is
 exactly the critical degree $\rho$.

 In \cite[Cor.4.2,Prop.5.6]{DADi01} it is proved that the
 minimum-dimension determinantal formula is attained at
 $m=\left\lfloor \rho/2\right\rfloor$ and $m=\left\lceil
   \rho/2\right\rceil$, ie.\ the center(s) of this interval.  For an
 illustration see Ex.~\ref{unmixedExamBezout}.
\end{example}

\subsection{Pure formulae}

A determinantal formula is pure if it is of the form $K_{1,a}\to
K_{0,b}$ for $a,b\in[0,n+1]$ with $a>b$. These formulae are either
Sylvester- or B\'ezout-type, named after the matrices for the
resultant of two univariate polynomials.

In the unmixed case both kinds of pure formulae exist exactly when for
all $k\in[1,r]$ it holds that $\min\{l_k,d_k\}=1$ \cite{SZ94,DE03}.
The following theorem extends this characterization to the scaled
case, by showing that only pure Sylvester formulae are possible and
the only data that admit such formulae are univariate and
bivariate-bihomogeneous systems.

\begin{theorem} \label{Pdeter} If $\bm s\neq \boldsymbol 1$ a pure
  Sylvester formula exists if and only if $r\leq 2$ and $\bm l= (1)$ or $\bm
  l=(1,1)$. If $l_1=n=1$ the degree vectors are given by
  $$m= d_1\sum_0^1 s_i-1 \ \text{ and }\ m'=-1  , $$
  whereas if $\bm l=(1,1)$ the vectors are given by
  $$
  \bm m=\left(-1, d_2 \sum_0^2 s_i-1\right)\ \text{ and }\ \bm
  m'=\left(d_1 \sum_0^2 s_i-1, -1\right).
  $$
  Pure B\'ezout determinantal formulae \emph{cannot} exist.
\end{theorem}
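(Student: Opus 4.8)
The plan is to analyze pure determinantal formulae through the combinatorial machinery already established, exploiting the fact that a pure formula $K_{1,a}\to K_{0,b}$ forces the complex to be concentrated in \emph{exactly one} summand per term. First I would recall from Lemma~\ref{det01} that any determinantal $\bm m$ gives $N=\{0,1\}$, so purity means $K_1=K_{1,a}$ and $K_0=K_{0,b}$ are single cohomology summands rather than direct sums. By Lemma~\ref{KnpNz}, $K_{\nu,p}\neq 0$ iff $\nu\in\{p-q(z):z\in Q_p\}$; so the requirement that only \emph{one} pair $(1,a)$ and one pair $(0,b)$ survive translates into a strong restriction: for each $p$, the set $\{p-q(z):z\in Q_p\}\cap\{0,1\}$ must be empty except for $p=a$ (giving $1$) and $p=b$ (giving $0$). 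Moreover within $K_{1,a}$ the K\"unneth decomposition \eqref{Kunneth} must itself collapse to a single tensor product, meaning that for the relevant $z\in Q_a$ only one index set $\{i_1<\cdots<i_a\}$ contributes, i.e.\ $\#(S_a\cap Q_a)=1$. Since $\bm s\neq\boldsymbol 1$, the inequalities $1<\#S_p<\binom{n+1}{p}$ for $0<p<n+1$ noted after \eqref{qmap} make this collapsing very rigid.

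Next I would pin down $r$ and $\bm l$. The B\'ezout-type case ($a=n+1$ or $b=0$, where a top or bottom cohomology over the whole product appears) should be ruled out first: a pure B\'ezout formula would need $K_{1,n+1}$ or $K_{0,0}$ to be the only nonzero summand on its side, but $K_{n+1,n+1}=H^0(\bm m-\sum_0^n s_i\bm d)$ and $K_{0,0}=H^0(\bm m)$ involve the \emph{full} sum $\sum_0^n s_i$ with no choice of indices --- the obstruction is that the \emph{other} boundary term then cannot also be a single summand unless all intermediate $K_{\nu,p}$ vanish, which via Bott's formula (Proposition~\ref{Botts}) and the pigeonhole bound $\#(\cup_k P_k)\le n$ forces contradictions with $\bm s\neq\boldsymbol 1$; I expect this to mirror \cite[Prop.~2.2]{DE03} or the argument of \cite[Sect.5]{DADi01}. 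For the Sylvester case, the single-summand condition on $K_{1,a}$ forces each $P_k$ to be large enough that $q$ jumps by $l_k$ all at once, and counting shows that having two or more groups of size $\ge 2$, or a single group of size $\ge 2$ together with $r\ge 2$, produces a second surviving $\nu\in\{0,1\}$ at some intermediate $p$; hence $r\le 2$ and all $l_k=1$, i.e.\ $\bm l=(1)$ or $\bm l=(1,1)$ (the case $l_1\ge 2$, $r=1$ being excluded similarly).

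Finally, for the two surviving configurations I would exhibit the vectors explicitly and check purity directly. For $r=1$, $l_1=n=1$: with $m=d_1(s_0+s_1)-1$ one computes $P=(m/d_1,(m+1)/d_1]\cap\ZZ$ and verifies via Bott that $K_1=K_{1,1}=\bigoplus H^0(\PP^1,m-s_i d_1)$ — wait, this must reduce to Sylvester shape — so I would instead check $Q_0,Q_1,Q_2$ and show $K_2=0$, $K_{-1}=0$ using $m<(s_0+s_1)d_1$ and $m\ge d_1 s_n - 1$ from Example~\ref{dandrea}, identifying the map $K_{1,1}\to K_{0,0}$ with the classical Sylvester matrix; the dual vector $m'=-1$ follows from $m+m'=\rho=d_1(s_0+s_1)-2$. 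For $\bm l=(1,1)$: with $\bm m=(-1,d_2(s_0+s_1+s_2)-1)$ one has $P_1=(-1/d_1,0]\cap\ZZ=\varnothing$ or $\{0\}$ and $P_2$ a large interval containing $S_1$ and $S_2$; I would verify $K_2^\sigma=K_{-1}^\sigma=0$ from \eqref{Ksigma} with $\sigma=\id$ and check that the surviving terms are $K_{1,1}$ and $K_{0,0}$ (a Sylvester-type map), the dual vector coming again from $\bm m+\bm m'=\bm\rho$ and the permutation $\tau(i)=r+1-\sigma(i)$.

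The main obstacle is the B\'ezout-nonexistence argument and, within the Sylvester case, the bookkeeping that rules out $l_k\ge 2$: one must show that whenever some $l_k\ge 2$ or $r\ge 3$, the intermediate terms $K_{2,2+\id[k-1]}$ or $K_{-1,-1+\id[k]}$ (or an analogous middle term) are forced to be nonzero \emph{and} cannot be absorbed, so no pure formula survives. This is where the hypothesis $\bm s\neq\boldsymbol 1$ is essential — it is precisely what makes $\#S_p\ge 2$ for intermediate $p$, breaking the single-summand purity that the unmixed case enjoyed — and getting the case analysis airtight (especially the boundary cases $l_k=1,2$ and small $r$) is the delicate part.
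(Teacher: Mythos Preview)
Your proposal contains a genuine misunderstanding of what ``pure'' means, and this derails the main argument. A pure formula $K_{1,a}\to K_{0,b}$ requires only that $K_{1,a}$ and $K_{0,b}$ be the sole nonzero $K_{\nu,p}$; the module $K_{1,a}$ itself remains a direct sum over index sets $\{i_1<\cdots<i_a\}$ and need \emph{not} collapse to a single summand. (Already the classical Sylvester matrix has $K_{1,1}=H^0(m-s_0d)\oplus H^0(m-s_1d)$.) Hence the constraint ``$\#(S_a\cap Q_a)=1$'' that you impose is not a consequence of purity, and the rigidity you build on it is unfounded. Your subsequent case analysis (``counting shows that having two or more groups of size $\ge 2$\dots produces a second surviving $\nu$'') is left as a vague assertion with no mechanism behind it once the false single-summand premise is removed.

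The paper's proof replaces this with a sharp counting argument that you allude to (``pigeonhole bound $\#(\cup_k P_k)\le n$'') but never complete. Purity plus determinantality force $K_{\nu,p}=0$ for every $(\nu,p)\neq(1,a),(0,b)$, hence $Q_p=\varnothing$ for all $p\neq a,b$, i.e.\ $\bigcup_{p\neq a,b}S_p\subseteq\bigcup_k P_k$. The strictly increasing chain $0,s_0,s_0+s_1,\dots,\sum_i s_i$ places one distinct element in each $S_p$, so the left side has cardinality at least $n$; the right side has cardinality at most $\sum_k l_k=n$. Equality $\#\bigcup_{p\neq a,b}S_p=n$ is then forced, and because $\bm s\neq\boldsymbol 1$ makes the intermediate $S_p$ strictly larger than singletons, this equality is achievable only when $n\le 2$, after which $\bm l\in\{(1),(1,1)\}$ and the explicit vectors follow. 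The B\'ezout case ($a=n+1$, $b=0$) dies instantly: $\bigcup_k P_k$ would have to contain $S_1\cup\cdots\cup S_n$, which already has more than $n$ elements when $\bm s\neq\boldsymbol 1$. One further slip in your verification: for $\bm l=(1,1)$ and $\bm m=(-1,\,d_2\sum_i s_i-1)$ one gets $P_1=\{0\}$, $P_2=\{\sum_i s_i\}$, so the surviving map is $K_{1,2}\to K_{0,1}$ (i.e.\ $a=2$, $b=1$), not $K_{1,1}\to K_{0,0}$.
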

Notice the duality $\bm m+\bm m'=\rho$.
\begin{proof}
  It is enough to see that if a pure formula is determinantal the
  following inequalities hold
  $$ n\leq \# \bigcup_{p\neq a,b} S_p  \leq \# \cup_1^r P_k \leq n $$
  which implies that equalities hold.  The inequality on the left
  follows from the fact that every $S_p,\ p\in[0,n+1]$ contains at
  least one distinct integer since the sequence
  $0,s_0,s_0+s_1,\dots,\sum_0^n s_i$ is strictly increasing. For the
  right inequality, note that the vanishing of all $K_{\nu,p}$ with
  $p\neq a,b$ implies $Q_p=\varnothing$ (see Lemma~\ref{KnpNz}). Thus
  $\cup_{p\neq a,b} S_p \subseteq \cup_{k=1}^r P_k$ so the cardinality
  is bounded by $\#\cup_{1}^r P_k\leq \sum_1^r \#P_k\leq \sum_1^r l_k
  = n$.  Consequently $\# \cup_{p\neq a,b} S_p = n $. Suppose $n>2$;
  the fact $\#(S_i\cup S_j)>2$ for $\{i,j\}\ne \{0,1\}$ implies
  $\cup_{p\neq a,b} S_p=S_i\cup S_{j}$ for some $i,j$,
  i.e.\ $\#\{a,b\}=n$, contradiction. Thus $n\leq 2$.

  Take $n=2$. Since $\#(S_0\cup S_{3})=2$, the above condition is
  satisfied for $a=2, b=1$: it is enough to set $\cup_{1}^r P_k=
  S_0\cup S_{3}=\{0, \sum_0^2 s_i\}$, thus the integers of $\cup_{1}^r
  P_k$ are not consecutive, so $r>1$ and $l=(1,1)$.  Similarly, if
  $n=l=1$ two formulae are possible; for $\cup_{1}^r P_k=S_0=\{0\}$
  ($a=2, b=1$) or $\cup_{1}^r P_k=S_2=\{s_0+s_1\}$ ($a=1, b=0$).

  All stated $\bm m$-vectors follow easily in both cases from
  $(m_k+l_k)/d_k = 0$ and $(m_k+l_k)/d_k = \sum_0^n s_i $.  A pure
  B\'ezout determinantal formula comes from $K_{1,n+1}\to K_{0,0}$.
  Now $\cup_k P_k$ contains $S_1\cup\cdots\cup S_n$ hence $\#\cup_k
  P_k > n$.  Thus it cannot exist for $\bm s\neq \boldsymbol 1$.
\end{proof}

All pure formulae above are of Sylvester-type, made explicit in
Section~\ref{MatConstr}.  If $n=1$, both formulae correspond to the
classical Sylvester matrix.

If $\bm s=\boldsymbol 1$ pure determinantal formulae are possible for
arbitrary $n,r$ and a pure formula exists if and only if for all $k$, $l_k=1$ or
$d_k=1$ \cite[Thm.4.5]{DE03}; if a pure Sylvester formula exists for
$a, b=a-1$ then another exists for $a=1, b=0$ \cite[p.~15]{DE03}.
Observe in the proof above that this is not the case if $\bm s\neq
\boldsymbol 1, n=2$, thus the construction of the corresponding
matrices for $a\neq 1$ now becomes important and highly nontrivial, in
contrast to~\cite{DE03}.

\section{Explicit matrix construction} \label{MatConstr}

In this section we provide algorithms for the construction of the
resultant matrix expressed as the matrix of the differential
$\delta_1$ in the natural monomial basis and we clarify all the
different morphisms that may be encountered. 

Before we continue, let us justify the necessity of our matrices,
using $\bm l=\bm d=(1,1)$ and $\bm s=(1,1,2)$, that is, the
system of two bilinear and one biquadratic equation to be examined
in Example~\ref{EXmaple}. It turns out that
a (hybrid) resultant matrix of minimum dimension is of size $4\times
4$. The standard B\'ezout-Dixon construction has size $6 \times 6$ but
its determinant is identically zero,
hence it does not express the resultant of the system.

The matrices constructed are unique up to row and column operations,
reflecting the fact that monomial bases may be considered with a
variety of different orderings. The cases of pure Sylvester or pure
B\'ezout matrix can be seen as a special case of the (generally
hybrid, consisting of several blocks) matrix we construct in this
section.

In order to construct a resultant matrix we must find the matrix of
the linear map $\delta_1: K_1\to K_0$ in some basis, ty\-pi\-cal\-ly
the natural monomial basis, provided that $K_{-1}=0$. In this case we
have a generically surjective map with a ma\-xi\-mal minor divisible
by the sparse resultant. If additionally $K_{2}=0$ then $\dim K_1=\dim
K_0$ and the determinant of the square matrix is equal to the
resultant, i.e.\ the formula is determinantal.  We consider
restrictions $\delta_{a,b}: K_{1,a}\to K_{0,b}$ for any direct summand
$K_{1,a},\ K_{0,b}$ of $K_1,\ K_0$ respectively. Every such
restriction yields a block of the final matrix of size defined by the
corresponding dimensions.  Throughout this section the symbols $a$ and
$b$ will refer to these indices.

\subsection{Sylvester blocks} \label{SMatConstr}

The Sylvester-type formulae we consider generalize the classical
univariate Sylvester matrix and the multigraded Sylvester matrices of
\cite{SZ94} by introducing multiplication matrices with block
structure. Even though these Koszul morphisms are known to correspond
to some Sylvester blocks since~\cite{WeZe} (see Proposition~\ref{sylvMap}
below), the exact interpretation of the morphisms into matrix formulae
had not been made explicit until now. We also rectify the
Sylvester-type matrix presented in~\cite[Sect.7.1]{DE03}.

By \cite[Prop.2.5,Prop.2.6]{WeZe} we have the following
\begin{proposition} {\rm \cite{WeZe}} \label{sylvMap} If $a-1<b$ then
  $\delta_{a,b}=0$. Moreover, if $a-1=b$ then $\delta_{a,b}$ is a
  Sylvester map.
\end{proposition}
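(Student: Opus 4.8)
The plan is to deduce the statement from the standard construction of the Weyman complex as the ``minimal'' (homotopy-transferred) totalization of the twisted Koszul complex of the generic section $\bm f$, following \cite{Wey94} and \cite{WeZe}; the scaled degrees $s_i\bm d$ play no special role there, so the argument carries over verbatim to our setting. Concretely, set $E:=\bigoplus_{i=0}^n\mathcal O_X(s_i\bm d)$, so that $\bm f=(f_0,\dots,f_n)\in H^0(X,E)$ defines the Koszul complex of sheaves
$$0\to\Lambda^{n+1}E^\vee\xrightarrow{\ \partial\ }\cdots\xrightarrow{\ \partial\ }\Lambda^1E^\vee\xrightarrow{\ \partial\ }\mathcal O_X\to0,$$
with $\partial$ the contraction by $\bm f$. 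Twisting by $\mathcal O_X(\bm m)$ and using $\Lambda^pE^\vee(\bm m)=\bigoplus_{0\le i_1<\cdots<i_p\le n}\mathcal O_X(\bm m-\sum_\theta s_{i_\theta}\bm d)$, the $q$-th cohomology of the summand indexed by $\{i_1<\cdots<i_p\}$ is exactly the corresponding summand of $K_{\nu,p}$ with $\nu=p-q$ (Definition~\ref{WeyComplex}). First I would recall that computing the hypercohomology of this generically exact complex through a \v{C}ech resolution yields a double complex, and that replacing each column by its cohomology — legitimate over a field, and compatible with the horizontal maps up to homotopy — produces $K_\bullet$ with differential $\delta_\nu=\sum_{r\ge1}d_r$, where the transferred map $d_r$ sends $K_{\nu,p}$ to $K_{\nu-1,p-r}$. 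Here $d_1$ is the map induced on cohomology by $\partial\colon\Lambda^pE^\vee(\bm m)\to\Lambda^{p-1}E^\vee(\bm m)$, and $d_2,d_3,\dots$ are the higher spectral-sequence corrections (these become the B\'ezout-type blocks).

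Granting this, the first assertion is immediate: every $d_r$ \emph{strictly} lowers the index $p$, so the component $\delta_{a,b}\colon K_{1,a}\to K_{0,b}$ of $\delta_1$ equals $d_{a-b}$, which is defined only for $a-b\ge1$ and is therefore zero whenever $a-1<b$. (Equivalently, neither the \v{C}ech nor the Koszul part of the totalization differential raises $p$, and the $p$-preserving, purely \v{C}ech part vanishes upon passing to column cohomology.) For the second assertion, $b=a-1$ forces $\delta_{a,a-1}=d_1$. On the summand indexed by $\{i_1<\cdots<i_p\}$, the map $\partial$ is, up to sign, the sum over $\theta$ of the multiplications $\mathcal O_X(\bm m-\sum_{\theta'}s_{i_{\theta'}}\bm d)\to\mathcal O_X(\bm m-\sum_{\theta'\ne\theta}s_{i_{\theta'}}\bm d)$ by the section $f_{i_\theta}\in H^0(\mathcal O_X(s_{i_\theta}\bm d))$. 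By the K\"unneth formula~(\ref{Kunneth}) and Bott's formula (Proposition~\ref{Botts}), every nonzero cohomology group occurring is a tensor product $\bigotimes_{k=1}^r H^{j_k}(\PP^{l_k},\cdot)$ with $j_k\in\{0,l_k\}$, so each tensor factor is either a space of forms $S(\cdot)$ or its dual; hence $d_1$ acts block-by-block either as honest multiplication by some $f_{i_\theta}$ on the $H^0$ factors (Lemma~\ref{qth_Coh}) or as the transpose of such a multiplication on the $H^{l_k}$ factors. This is precisely a Sylvester map, the explicit block form being worked out in Section~\ref{SMatConstr}.

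The only genuine obstacle is the middle step — that $\delta$ is filtered by $p$ with leading term the Koszul-induced $d_1$. For this I would invoke the construction of the complex in \cite{Wey94,WeZe} rather than redo it, checking merely that it never uses $s_i=1$. Everything else is bookkeeping: the first claim is a one-line degree count, and the identification of $d_1$ with a Sylvester map only unwinds K\"unneth and Bott, together with the trivial observation that the transpose of a ``multiply by $f_i$'' matrix has the same coefficient pattern, so each block is of Sylvester type in the sense of \cite{SZ94}.
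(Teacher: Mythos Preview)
The paper does not actually prove this proposition: it is stated with attribution to \cite[Prop.~2.5, Prop.~2.6]{WeZe} and no argument is given beyond that citation. Your sketch correctly reconstructs the content of that reference---the Weyman complex is the homotopy-transferred (equivalently, spectral-sequence) totalization of the twisted Koszul--\v{C}ech double complex, so its differential splits as $\delta=\sum_{r\ge1}d_r$ with $d_r$ lowering the Koszul index $p$ by exactly $r$; hence $\delta_{a,b}=d_{a-b}$ vanishes for $a-b<1$ and equals the cohomology-induced Koszul contraction (multiplication by the $f_{i_\theta}$, with the usual signs) when $a-b=1$. Your remark that the scaled degrees $s_i\bm d$ are irrelevant to this construction is precisely why the paper is content to cite the unmixed result. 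In short, your proposal is correct and supplies more than the paper itself does; the paper's ``proof'' is simply the reference.
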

If $a=1$ and $b=0$ then every coordinate of $\bm m$ is non-negative
and there are only zero cohomologies involved in $K_{1,1}=\bigoplus_i
H^0(\bm m - s_i\bm d)$ and $K_{0,0}=H^0(\bm m)$. This map is a well
known Sylvester map expressing the multiplication
$(g_0,\dots,g_n)\longmapsto \sum_{i=0}^n g_if_i$.  The entries of the
matrix are indexed by the exponents of the basis monomials of
$\bigoplus_i S(\bm m - s_i\bm d)$ and $S(\bm m)$ as well as the chosen
polynomial $f_i$.
 The entry indexed ${(i,\bm \alpha),\bm \beta}$ can be computed as:
 $$
 \text{coef}\left(f_i, \bm x^{\bm \beta-\bm \alpha}\right) \ \ ,\ \
 i=0,1,\dots,n
 $$
 where $\bm x^{\bm \alpha}$ and $\bm x^{\bm \beta}$ run through the
 corresponding monomial bases. The entry $(i,\bm \alpha),\bm \beta$
 is zero if the support of $f_k$ does not contain ${\bm \beta-\bm
   \alpha}$.
Also, by Serre duality a block $K_{1,n+1}\to K_{0,n}$ corresponds to
the dual of $K_{1,1}\to K_{0,0}$, i.e.\ to the degree vector $\bm \rho
-\bm m$, and yields the same matrix transposed.

The following theorem constructs corresponding Sylvester-type matrix in the ge\-ne\-ral case.

\begin{theorem} \label{SylvMat} The entry of the transposed matrix of
  $\delta_{a,b}: K_{1,a}\to K_{0,a-1} $ in row $(I,\bm \alpha)$ and column
  $(J,\bm \beta)$ is
  $$
  %\left[S_{a,b}\right]_{(I,\bm \alpha),(J,\bm \beta)} =
  \left\{\begin{array}{l}
      0, \mbox{ if } J\not\subset I,\\
      (-1)^{k+1}\text{\rm coef}\left(f_{i_k}, \bm x^{\bm u}\right),
      \mbox{ if }I\setminus J=\{i_k\},
    \end{array} \right.
  $$
  where $I=\{ i_1<i_2<\cdots<i_{a} \}$ and $J=\{
  j_1<j_2<\cdots<j_{a-1} \}$, $I,J\subseteq\{0,\dots,n\}$.  Moreover,
  $\bm \alpha,\bm \beta\in\NN^n$ run through the exponents of monomial
  bases of $H^{a-1}(\bm m- \bm d\sum_{\theta=1}^a s_{i_\theta})$,
  $H^{a-1}(\bm m- \bm d\sum_{\theta=1}^{a-1} s_{j_\theta})$, and $\bm
  u\in\NN^n$, with $u_t= |\beta_t-\alpha_t|$.
\end{theorem}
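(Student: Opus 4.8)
The plan is to realize the differential $\delta_{a,b}$ with $b=a-1$ as the restriction of the Koszul differential to the degree-$\bm m$ graded piece, and then to read off the matrix entries in the monomial bases coming from the K\"unneth decomposition. First I would recall that, by Proposition~\ref{sylvMap}, when $b=a-1$ the map $\delta_{a,b}\colon K_{1,a}\to K_{0,a-1}$ is a Sylvester map, i.e.\ it is induced by the $p$-th Koszul differential $\Lambda^p E\to \Lambda^{p-1}E$ (with $E=\bigoplus_{i=0}^n \mathbb F\,e_i\otimes S(-s_i\bm d)$ twisted appropriately), contracted against $\bm f=(f_0,\dots,f_n)$. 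Concretely, on a basis element $e_I=e_{i_1}\wedge\cdots\wedge e_{i_a}$ with $I=\{i_1<\cdots<i_a\}$ the differential acts by
$$
e_{i_1}\wedge\cdots\wedge e_{i_a}\ \longmapsto\ \sum_{k=1}^{a}(-1)^{k+1}\, f_{i_k}\cdot\big(e_{i_1}\wedge\cdots\wedge\widehat{e_{i_k}}\wedge\cdots\wedge e_{i_a}\big),
$$
which already explains the two alternatives in the statement: the $(I,J)$-block vanishes unless $J=I\setminus\{i_k\}$ for a single $k$, and in that case it is multiplication by $\pm f_{i_k}$ with sign $(-1)^{k+1}$.

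The second step is to identify the bases. By Lemma~\ref{qth_Coh} and the K\"unneth formula~(\ref{Kunneth}), the summand $H^{a-1}(\bm m-\bm d\sum_{\theta=1}^a s_{i_\theta})$ indexed by $I$ is a single tensor product $\bigotimes_k H^{j_k}(\PP^{l_k},m_k-d_k\sum s_{i_\theta})$ with $\sum j_k=a-1$; each factor with $j_k=0$ has the monomial basis of $S(\alpha_k)$ and each factor with $j_k=l_k$ is a dual space which, via the negative symmetric power / Serre-duality identification used in Section~\ref{resultants} (and in \cite[p.576]{WeZe}), again carries a monomial-type basis indexed by exponent vectors. Bundling these per-group exponents into a single $\bm\alpha\in\NN^n$ (respectively $\bm\beta\in\NN^n$ for the target summand indexed by $J$), the block $\delta_{I,J}$ is the multiplication-by-$f_{i_k}$ matrix from $H^{a-1}(\bm m-\bm d\sum_{\theta=1}^a s_{i_\theta})$ to $H^{a-1}(\bm m-\bm d\sum_{\theta=1}^{a-1} s_{j_\theta})$; note the two cohomology degrees agree, both being $a-1$, which is exactly why the map preserves the cohomological degree and is a genuine (block) Sylvester matrix rather than something more exotic. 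In the all-zero-cohomology sub-case this is literally the classical statement recalled just before the theorem: the entry is $\mathrm{coef}(f_{i_k},\bm x^{\bm\beta-\bm\alpha})$.

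The third step, and the one I expect to be the main obstacle, is to justify the appearance of the absolute value $u_t=|\beta_t-\alpha_t|$ uniformly across the groups, rather than $\beta_t-\alpha_t$. The point is that within a group $k$ there are two regimes: if $j_k=0$ the factor is an honest polynomial space and multiplication by a degree-$s_{i_k}d_k$ form sends $\bm x^{\bm\alpha_k}\mapsto \sum \mathrm{coef}(f_{i_k},\cdot)\,\bm x^{\bm\beta_k}$ with the coefficient read at exponent $\bm\beta_k-\bm\alpha_k\ge 0$; if $j_k=l_k$ the factor is a dual space, multiplication becomes the transpose (contraction) map, and on the negative-symmetric-power basis the relevant exponent is $\bm\alpha_k-\bm\beta_k\ge 0$ instead. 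Since exactly one of the two differences is nonnegative in each coordinate, writing $u_t=|\beta_t-\alpha_t|$ is the coordinate-free way to cover both cases at once, and one checks that the support condition (entry is zero unless $\bm x^{\bm u}$ lies in the support of $f_{i_k}$) is the common specialization. I would carry this out by fixing $I$ and $J=I\setminus\{i_k\}$, writing $\delta_{I,J}$ as a tensor product over $k$ of elementary maps (identity on factors outside the one affected, multiplication or its transpose on the relevant factor according to whether $j_k$ is $0$ or $l_k$ for the source and target degrees, which coincide here), and then observing that the matrix of a tensor product of maps is the Kronecker product, whose $(\bm\alpha,\bm\beta)$-entry is the product over $k$ of the per-group entries — each of which is $\mathrm{coef}(f_{i_k},\bm x^{u_t\text{ in group }k})$ up to the sign $(-1)^{k+1}$ carried once by the Koszul differential. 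Assembling over all $(I,\bm\alpha)$ and $(J,\bm\beta)$, transposing to match the stated ``transposed matrix'' convention (consistent with the remark that $K_{1,n+1}\to K_{0,n}$ is the transpose of $K_{1,1}\to K_{0,0}$), gives the claimed formula.
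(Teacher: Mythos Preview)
Your plan is correct and follows essentially the same route as the paper: identify $\delta_{a,a-1}$ with the Koszul differential to obtain the block structure and sign $(-1)^{k+1}$, then read off the entries of the multiplication-by-$f_{i_k}$ map in the monomial bases, handling the dual (negative symmetric power) factors to justify the absolute value $u_t=|\beta_t-\alpha_t|$. One small caution: the block $\delta_{I,J}=M(f_{i_k})$ is not literally a tensor product of per-group maps with only one factor ``affected'' (since $f_{i_k}$ has positive degree in every group and is generally not a product of univariate-in-each-group forms); rather, each monomial of $f_{i_k}$ contributes such a tensor product, and summing them yields directly the entry $\mathrm{coef}(f_{i_k},\bm x^{\bm u})$ --- which is exactly how the paper phrases it, so just tighten that sentence and the argument goes through.
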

\begin{proof} 
  Consider a basis of ${\textstyle\bigwedge^a} V$, $\{
  e_{i_1,i_2,\dots,i_{a}}\, :\, 0\leq i_1<i_2<\cdots<i_{a}\leq n \}$
  and similarly for ${\textstyle\bigwedge^{a-1}} V$, where
  $e_{0},\dots,e_{n}$ is a basis for $V$.  This differential expresses
  a classic Koszul map
  $$
  \partial_a(e_{i_1},\dots,e_{i_{a}})= \sum_{k=0}^n (-1)^{k+1} f_{i_k}
  e_{i_1,\dots, i_{k-1}, i_{k+1},\dots,i_{a}}
  $$
  and by \cite[Prop.2.6]{WeZe}, this is identified as multiplication
  by $f_{i_k}$, when passing to the complex of modules.

  Now fix two sets $I\subseteq J$ with $I\setminus J=\{i_k\}$,
  corresponding to a choice of basis elements $e_{I}$, $e_J$ of the
  exterior algebra; then the part of the Koszul map from $e_{I}$ to
  $e_J$ gives
  $$
  (-1)^{k+1}M(f_{i_k}):H^{a-1}(\bm m- \bm d{\sum_{\theta\in I}}
  s_{\theta}) \to H^{a-1}(\bm m- \bm d{\sum_{\theta\in J}} s_{\theta})
  $$
  This multiplication map is a product of homogeneous multiplication
  operators in the symmetric power basis. This includes operators between
  negative symmetric powers, where multiplication is expressed by
  applying the element of the dual space to $f_{i_k}$.

  To see this, consider basis elements $\bm w^{\bm \alpha}$, $\bm
  w^{\bm \beta}$ that index a row and column resp. of the matrix of
  $M(f_{i_k})$. Here the part $\bm w_k$ of $\bm w$ associated with the
  $k$-th variable group is either $\bm x_k^{\bm \alpha_k}$ or a dual
  element indexed by $\bm \alpha_k$.  We identify dual elements with
  the negative symmetric powers, thus this can be thought as
  $\bm x_k^{-\bm \alpha_k}$. This defines $\tilde{\bm \alpha},\tilde{\bm
    \beta}\in\ZZ^n$; the generalized multihomogeneous multiplication
  by $f_{i_k}$ as in~\cite[p.577]{WeZe} is, in terms of multidegrees,
  incrementing $|\tilde{\bm \alpha}_k|$ by $s_id_k$ to obtain $|\tilde{\bm \beta}_k|$,
  and hence the corresponding matrix has entry $\text{\rm
    coef}\left(f_{i_k}, \bm x^{\bm u}\right)$, where $u_t=
  |\beta_t-\alpha_t|$, $t\in[1,n]$.  The absolute value is needed because for
  multiplication in dual spaces, the degrees satisfy $-|\bm \alpha_k| +
  s_id_k=-|\bm \beta_k| \Rightarrow
  s_id_k=|\bm \alpha_k|-|\bm \beta_k|=-(|\bm \beta_k|-|\bm \alpha_k|)$.
\end{proof}

In \cite[Sect.7.1]{DE03}, an example is studied that admits a
Sylvester formula with $a=2, b=1$. The matrix derived by such a
complex is described by Theorem~\ref{SylvMat} above and does not coincide
with the matrix given there. The following example is taken from there
and presents the correct formula.

\begin{example} 
  \label{unmixedExamSylv} Consider the unmixed case $\bm l=(1,1),\
  \bm d=(1,1)$, as in \cite[Sect.7.1]{DE03}. This is a system of three
  bi-linear forms in two affine variables. The vector $\bm m=(2,-1)$
  gives $K_1=K_{1,2}=H^1(0,-3)^{\binom{3}{2}}$ and
  $K_0=K_{0,1}=H^1(1,-2)^{\binom{3}{1}}$. The Sylvester map
  represented here is
  $$ \delta_1\ :\ (g_0,g_1,g_2) \mapsto (-g_0f_1-g_1f_2, g_0f_0-g_2f_2, g_1f_0+g_2f_1)  $$
  and is similar to the one
  in~\cite{DAnEmi02}. By Theorem~\ref{SylvMat}, it yields the
  following (transposed) matrix, given in $2\times 2$ block format:
  $$
  \left[ \begin{array}{ccc}
      -M(f_1) & M(f_0)  &  \bm 0  \\
      -M(f_2) & \bm  0  &  M(f_0) \\
      \bm 0 & -M(f_2) & M(f_1)
    \end{array}\right]
  $$
  If $g=c_0+c_1x_1+c_2x_2+c_3x_1x_2$ the matrix of the multiplication
  map
  $$M(g)\ :\ \ \ S(0)\otimes S(1)^* \ni w \  \longmapsto\  w g \in S(1)\otimes S(0)^* $$
  in the natural monomial basis is $\left[ \begin{array}{cc} c_2 &
      c_3 \\ c_0 & c_1 \end{array}\right]$ as one can easily verify by
  hand calculations or using procedure \texttt{multmap} of our \Maple\
  package presented in Section~\ref{implement}.
\end{example} 

\subsection{B\'ezout blocks}\label{BMatConstr}

A B\'ezout-type block comes from a map of the form $\delta_{a,b}:
K_{1,a} \to K_{0,b}$ with $a-1>b$. In the case $a=n+1$, $b=0$ this is
a map corresponding to the Bezoutian of the system, whereas in other
cases some B\'ezout-like matrices occur, from square subsystems
obtained by hiding certain variables.

Consider the B\'ezoutian, or Morley form~\cite{jouan}, of
$f_0,\dots,$ $f_n$. This is a polynomial of multidegree $(\bm \rho,\bm \rho)$
in $\mathbb F[\bar{\bm x},\bar{\bm y}]$ and can be decomposed as
$$ 
\Delta := \sum_{u_1=0}^{\rho_1}\cdots\sum_{u_r=0}^{\rho_r}
\Delta_u(\bar x)\cdot \bar y^u
$$
where $\Delta_u(\bar x)\in S$ has $\deg \Delta_u(\bar x)=\bm \rho-\bm u$. Here
$\bar{\bm x}=(\bar{\bm x}_{1},\dots, \bar{\bm x}_{r})$ is the set of homogeneous
variable groups and $\bar{\bm y}=(\bar{\bm y}_1,\dots,\bar{\bm y}_r)$ a set of new
variables with the same cardinalities.

The Bezoutian gives a linear map
$$
{\textstyle\bigwedge^{n+1}} V \rightarrow \bigoplus_{m_k\leq \rho_k}
S(\bm \rho-\bm m)\otimes S(\bm m) .
$$
where the space on the left is the $(n+1)$-th exterior algebra of $V=
S(s_0\bm d)\oplus\cdots\oplus S(s_n\bm d)$ and the direct sum runs
over all vectors $\bm m\in\ZZ^r$ with $m_k\leq \rho_k$ for all $k\in
[1,r]$.
In particular, the graded piece of $\Delta$ in degree $(\bm \rho - \bm m, \bm m )$
in $(\bar x,\bar y)$ is
$$ 
\Delta_{\bm \rho-\bm m,\bm m} := \sum_{u_k = m_k } \Delta_u(\bar{\bm x})\cdot \bar{\bm y}^{\bm u}
$$ 
for all monomials $\bar{\bm y}^{\bm u}$ of degree $\bm m$ and coefficients in
$\mathbb F[\bar x]$ of degree $\bm \rho-\bm m$. It yields a map
$$ 
S(\bm \rho - \bm m)^* \longrightarrow S(\bm m) 
$$
known as the B\'ezoutian in degree $m$ of $f_0,\dots,f_n$.  The
differential of $K_{1,n+1}\to K_{0,0}$ can be chosen to be exactly
this map, since evidently $K_{0,0}=H^0(\bm m)\simeq S(\bm m)$ and
$$
K_{1,n+1} = H^n\left(\bm m-\sum_0^n{s_i}\bm d\right) \simeq S
\left(-\bm m+\sum_0^n{s_i}\bm d + \bm l + \boldsymbol 1 \right)^*
$$
according to Serre duality (see Section~\ref{complex}). Thus,
substituting the critical degree vector, we get $K_{1,n+1}=S(\bm \rho-\bm m)^*$.

The polynomial $\Delta$ defined above has $n+r$ homogeneous variables
and its homogeneous parts can be computed using a determinant
construction in~\cite{AwaChkGoz05}, which we adopt here.
We recursively consider, for $k=1,\ldots ,r$ the uniquely defined
polynomials $f_{i,j}^{(1)}$, where $0\leq j\leq l_{k}$, as follows:
\begin{equation}
f_{i} =x_{1,0}f_{i,0}^{\left( 1\right) }+\ldots
	+x_{1,l_{1}}f_{i,l_{1}}^{\left( 1\right) }, \quad 
f_{i,j}^{\left( 1\right) } \in \mathbb{F}\left[ x_{1,j},\ldots ,x_{1,l_{1}}\right]
	\left[ \overline{x}_{2},\ldots ,\overline{x}_{r}\right] ,
\label{Dcompo-f}
\end{equation}
for all $i=1,\ldots ,n.$ To define $f_{i,j}^{(k)},$ for $2\leq
k\leq r$ and $0\leq j\leq l_{j},$ we decompose $f_{i,l_{k-1}}^{(k-1)}$
as in~(\ref{Dcompo-f}) with respect to the group $\bm x_{j}$: 
\[
\begin{array}{ll}
f_{i,l_{k-1}}^{(k-1)} & =x_{k,1}f_{i,1}^{(k)}+\ldots
+x_{k,l_{k}}f_{i,l_{k}}^{(k)} \\ 
f_{i,j}^{(k)} & \in \mathbb{F}\left[ x_{1,l_{1}},\ldots ,x_{k-1,l_{k-1}}\right] 
\left[ x_{k,j},\ldots ,x_{k,l_{k}}\right] \left[ \overline{x}
_{k+1},\ldots ,\overline{x}_{r}\right] .
\end{array}
\]
Overall we obtain a decomposition
$$f_{i}=\sum\limits_{j=0}^{l_{1}-1}x_{1,j}f_{i,j}^{\left( 1\right)
}+x_{1,l_{1}}\sum\limits_{j=0}^{l_{2}-1}f_{i,j}^{\left( 2\right) }+\cdots
+\prod\limits_{t=1}^{k-1}x_{t,l_{t}}\sum%
\limits_{j=1}^{l_{k}-1}x_{k,j}f_{i,j}^{(k)}+\cdots
$$
$$
\cdots+\prod\limits_{t=1}^{r-1}x_{t,l_{t}}\sum%
\limits_{j=1}^{l_{r}-1}x_{r,j}f_{i,j}^{(r)}+\prod%
\limits_{t=1}^{r}x_{t,l_{t}} f_{i,l_{r}}^{(r)}
 f_{i,l_r} \in \mathbb{F}[x_{1,l_1},\dots,x_{1,l_r}]  
$$
of the polynomial $f_{i}$, for all
$i=1,\ldots ,n.$ 
The order of the variable groups, from left to right, corresponds to choosing the
permutation $\pi=\id$. The determinant of size $(n+1)\times(n+1)$ given by
\[
{\cal D=}\left| 
\begin{array}{lllllllllll}
f_{0,0}^{\left( 1\right) } & \ldots & f_{0,l_{1}-1}^{\left( 1\right) } & 
 \ldots & f_{0,0}^{(k)} & \ldots & f_{0,l_{k}-1}^{(k)} & \ldots & 
f_{0,0}^{\left( r\right) } & \ldots & f_{0,l_{r}}^{\left( r\right) } \\ 
\vdots &  & \vdots &  & \vdots &  & \vdots &  & \vdots &  & \vdots \\ 
f_{i,0}^{\left( 1\right) } & \ldots & f_{i,l_{1}-1}^{\left( 1\right) } & \ldots
& f_{i,0}^{(k)} & \ldots & f_{i,l_{k}-1}^{(k)} & \ldots & f_{i,0}^{\left(
r\right) } & \ldots & f_{i,l_{r}}^{\left( r\right) } \\ 
\vdots &  & \vdots &  & \vdots &  & \vdots &  & \vdots &  & \vdots \\ 
f_{n,0}^{\left( 1\right) } & \ldots & f_{n,l_{1}-1}^{\left( 1\right) } & \ldots
& f_{n,0}^{(k)} & \ldots & f_{n,l_{k}-1}^{(k)} & \ldots & f_{n,0}^{\left(
r\right) } & \ldots & f_{n,l_{r}}^{\left( r\right) }
\end{array}
\right|  ,
\]
is equal to $\Delta_{\bm \rho-\bm m,\bm m}$, in our setting, as we have the following:
\begin{theorem}{\rm \cite{AwaChkGoz05}}
  The determinant $\cal D$ is an inertia form of degree $\rho_k-m_k$
  with respect to the variable group $\bm x_k$, $k=1,\dots,r$.
\end{theorem}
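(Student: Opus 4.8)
The statement claims two things about $\mathcal D$: that it is an inertia form of the ideal $(f_0,\dots,f_n)$ with respect to the multihomogeneous irrelevant ideal $B$ of $X$, and that $\deg_{\bm x_k}\mathcal D=\rho_k-m_k$ for every $k$. The plan is to prove these separately; both arguments are purely formal and need no genericity assumption on $\bm f$. For the inertia-form part, the key input is the matrix identity supplied by the recursive decomposition: collecting the $n+1$ ``staircase'' monomials obtained by peeling off the groups $\bm x_1,\dots,\bm x_r$ in turn into $\mu_0,\dots,\mu_n$ --- namely $x_{1,0},\dots,x_{1,l_1-1}$, then $x_{1,l_1}x_{2,0},\dots,x_{1,l_1}x_{2,l_2-1}$, and so on, ending with $x_{1,l_1}\cdots x_{r,l_r}$ --- the decomposition reads $f_i=\sum_{a=0}^{n}\mu_a f_{i,a}$ for $i=0,\dots,n$, where $(f_{i,a})_{i,a}$ is exactly the $(n+1)\times(n+1)$ matrix whose determinant is $\mathcal D$. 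First I would apply Cramer's rule to this linear system, treating the $\mu_a$ as the unknowns: this gives $\mu_a\,\mathcal D\in(f_0,\dots,f_n)$ for every $a$, hence $(\mu_0,\dots,\mu_n)\,\mathcal D\subseteq(f_0,\dots,f_n)$. Next I would verify that the $\mu_a$ have no common zero on $X=\PP^{l_1}\times\cdots\times\PP^{l_r}$: vanishing of $x_{1,0},\dots,x_{1,l_1-1}$ forces the point to be $[0:\cdots:0:1]$ in $\PP^{l_1}$, so $x_{1,l_1}\neq 0$ and the monomials $x_{1,l_1}x_{2,j}$ force $[0:\cdots:0:1]$ in $\PP^{l_2}$, and so on down to $\PP^{l_{r-1}}$, after which the monomials $x_{1,l_1}\cdots x_{r-1,l_{r-1}}x_{r,j}$, $j=0,\dots,l_r$, would force $x_{r,0}=\cdots=x_{r,l_r}=0$, which is impossible. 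Hence $V(\mu_0,\dots,\mu_n)=\varnothing$ in $X$, so by the multiprojective Nullstellensatz $B^N\subseteq(\mu_0,\dots,\mu_n)$ for some $N$; combining with the previous inclusion gives $B^N\,\mathcal D\subseteq(f_0,\dots,f_n)$, i.e.\ $\mathcal D$ is an inertia form.

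For the degree claim I would track $\deg_{\bm x_k}f_{i,j}^{(k')}$ along the recursion. The peelings of the groups $\bm x_1,\dots,\bm x_{k-1}$ do not affect the $\bm x_k$-degree, so the polynomial entering the peeling of $\bm x_k$ still has $\bm x_k$-degree $s_id_k$; that peeling lowers it by exactly one; and the subsequent peelings leave it untouched. Hence $\deg_{\bm x_k}f_{i,j}^{(k')}$ equals $s_id_k$ when $k'<k$ and $s_id_k-1$ when $k'\geq k$. In the Leibniz expansion of $\mathcal D$, every transversal term $\prod_{i}f_{i,c(i)}$ therefore has $\bm x_k$-degree $d_k\sum_0^n s_i$ minus the number of columns arising from the peelings of $\bm x_k,\dots,\bm x_r$ (since $c$ is a bijection, this count does not depend on the transversal); this simultaneously shows that $\mathcal D$ is $\bm x_k$-homogeneous and determines its degree. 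As those columns number $l_k+\cdots+l_{r-1}+(l_r+1)=n+1-\id[k-1]$, the degree is $d_k\sum_0^n s_i-(n+1-\id[k-1])=\rho_k-m_k$ with $m_k=n-\id[k]$, which is the degree vector attached to the order $\pi=\id$; and $0\leq m_k\leq\rho_k$ follows at once from $\id[k]\leq n$ and $d_k\sum_0^n s_i\geq n+1$.

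The main obstacle is combinatorial book-keeping rather than any deep difficulty: one must pin down the structure of the peeling decomposition precisely --- that peeling the $r$ groups in order yields exactly $n+1$ coefficient polynomials together with $n+1$ staircase monomials whose common zero locus in $X$ is empty --- since both the Cramer step and the degree count rest on it. The one genuinely conceptual ingredient is invoking the appropriate Nullstellensatz over the multiprojective space $X$, namely that a multihomogeneous ideal with empty zero set contains a power of the irrelevant ideal $B$; this is exactly the fact that makes ``inertia form'' the pertinent notion here. Finally, to justify the accompanying assertion that $\mathcal D$ \emph{equals} the graded B\'ezoutian component $\Delta_{\bm\rho-\bm m,\bm m}$, one would additionally invoke the classical uniqueness (up to a scalar) of inertia forms of degree $\bm\rho-\bm m$, so that the inertia form $\mathcal D$ of that degree must be a scalar multiple of it; I would cite this rather than reprove it.
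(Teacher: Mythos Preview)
The paper does not supply a proof of this theorem; it is quoted directly from \cite{AwaChkGoz05} and used as a black box. Your argument is correct and is essentially the standard one found in that reference: the recursive peeling yields the matrix identity $f_i=\sum_a\mu_a f_{i,a}$, Cramer's rule gives $\mu_a\mathcal D\in(f_0,\dots,f_n)$, and the staircase monomials $\mu_a$ have empty common zero locus on $X$, so the multiprojective Nullstellensatz forces $B^N\mathcal D\subseteq(f_0,\dots,f_n)$. Your degree count is also right; the key observation---that every Leibniz transversal picks up exactly $n+1-\id[k-1]$ factors of $\bm x_k$-degree $s_id_k-1$ because a permutation hits each column once---is precisely what makes $\mathcal D$ homogeneous of degree $d_k\sum s_i-(n+1-\id[k-1])=\rho_k-(n-\id[k])$ in $\bm x_k$.

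One remark on the final paragraph: the identification $\mathcal D=\Delta_{\bm\rho-\bm m,\bm m}$ is indeed not part of the theorem as stated, and your instinct to separate it is sound. The uniqueness statement you would cite is the one-dimensionality of the degree-$(\bm\rho-\bm m)$ graded piece of the inertia-form module (equivalently, of the local cohomology), which is exactly what Jouanolou's duality for the Morley form establishes; this is the content of \cite{jouan} and \cite{AwaChkGoz05} in the multigraded setting.
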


Let us show a more simple construction of some part $\Delta_{\bm
  \rho-\bm m,\bm m}$ using an affine B\'ezoutian.
Let $\bm x_k=(x_{k,1},\dots, x_{k,l_k})$ the (dehomogenized) $k$-th
variable group, and $\bm y_k=( y_{k,1},\dots, y_{k,l_k})$.  As a result
the totality of variables is $\bm x=(\bm x_1,\dots,\bm x_r)$ and $\bm
y=(\bm y_1,\dots,\bm y_r)$.

We set $\bm w_{t},\ t=1,\dots n-1$ the conjunction of the first $t$
variables of $\bm y$ and the last $n-t$ variables of $\bm x$.

If $a=n+1,b=0$ the affine B\'ezoutian construction follows from the
expansion of
$$ 
\left|\begin{array}{ccc}
    f_0(\bm x) & f_0(\bm w_1) \ \cdots \ f_0(\bm w_{n-1}) & f_0(\bm y)\\
    \vdots& \vdots \ \ \ \ \ \ \ \ \  \ \ \ \ \  \vdots & \vdots \\
    f_n(\bm x) & f_n(\bm w_1) \ \cdots \ f_n(\bm w_{n-1}) & f_n(\bm y)
\end{array}
\right| / \prod_{k=1}^r\prod_{j=1}^{l_k} (x_{kj}-y_{kj})
$$
as a polynomial in $\mathbb F[y]$ with coefficients in $\mathbb F[x]$.
Hence the entry indexed $\bm \alpha, \bm \beta$ of the B\'ezoutian in some
degree can be computed as the coefficient of $\bm x^{\alpha}\bm
y^{\bm \beta}$ of this polynomial.

We propose generalizations of this construction for arbitrary $a,b$
that are called \emph{partial Bezoutians}, as in~\cite{DE03}.  
It is clear that $a-1=q(z_1)$ and $b=q(z_2)$, for $z_1\in Q_{a}$ and
$z_2\in Q_b$. The difference $a-b-1= \sum_{\theta=1}^{t} l_{k_\theta}$
where $k_1,\dots,k_t$ is a subsequence of $[1,r]$, since if $P_k < b $
then $P_k < a$ thus
$$
q(a)-q(b)=\sum_{P_k < a}l_k - \sum_{P_k < b}l_k= \sum_{ b < P_k <
  a}l_k .
$$

These indices suggest the variable groups that we should substitute in
the partial Bezoutian.  Note that in the case of Bezoutian blocks, it
holds $a-b-1>0$ thus some substitutions will actually take place.  Let
$ i_1,\dots,i_{a-b}$ be a subsequence of $[0,n]$.  We can define a
partial Bezoutian polynomial with respect to
$f_{i_1},\dots,f_{i_{a-b}}$ and $\bm y_{k_1},\dots,\bm y_{k_t}$ as
\begin{equation}\label{pBez}
  \left|\begin{array}{ccccc}
      f_{i_1}(\bm x) & \cdots  & f_{i_1}(\bm w)\\
      \vdots&  & \vdots \\
      f_{i_{a-b}}(\bm x)  & \cdots  & f_{i_{a-b}}(\bm w)
\end{array} \right| /
\prod_{\theta=1}^t\prod_{j=1}^{l_{i_\theta}} (x_{i_\theta, j}-y_{i_\theta, j}) .
\end{equation}
In this Bezoutian, only the indicated $\bm y$-variable substitutions
take place, in successive columns: The variable vector $\bm w$ differs
from $\bm x$ at $\bm x_{k_1},\dots, \bm x_{k_t}$, these have been
substituted gradually with $\bm y_{k_1},\dots, \bm y_{k_t}$. 
Note that $\bm w$ generalizes the vectors $\bm w_t$ defined earlier,
in the sense that the variables of only specific groups are
substituted.
The total number of substituted variables is $a-b-1$, so this is
indeed a B\'ezout type determinant.

For given $a$ and $b$, there exist $\binom{n+1}{a-b}$ partial
Bezoutian polynomials. The columns of the final matrix are indexed by
the $\bm x$-part of their support, and the rows are indexed by the
$\bm y$-part as well as the chosen polynomials
$f_{i_1},\dots,f_{i_{a-b}}$.

\begin{example} \label{unmixedExamBezout} Consider the unmixed data
  $l=2,\ d=2,\ {\bf s}=(1,1,1)$. Determinantal formulae are $m\in[0,3]$,
  which is just $m=0$, $m=1$ and their transposes. Notice how these
  formulae correspond to the decompositions of $\rho=3=3+0=2+1$.  In
  both cases the complex is of block type $K_{1,3}\to K_{0,0}\oplus
  K_{0,2}$. The Sylvester part $K_{1,3}\to K_{0,2}$ can be retrieved
  as in Ex.~\ref{unmixedExamSylv}. For $m=0$ the B\'ezout part is
  $H^2(-6)\simeq S(3)^* \to H^0(0)\simeq S(0)$, whose $5\times 1$
  matrix is in terms of brackets
$$
\left[\begin{array}{ccccc}
\   [142]& [234]+[152]& [235]& [042]& [052]    \
\end{array}\right]^T  .
$$
A bracket $[ijk]$ is defined as
$$
[ijk] := \det\left[\begin{array}{ccc} a_i & a_j & a_k\\ b_i & b_j & b_k\\ c_i & c_j & c_k \end{array}\right],
$$
where $a_i,b_i,c_i$ denote coefficients of $f_0,f_1,f_2$ respectively,
for instance
$f_2=c_{{0}}+c_{{1}}x_{{2}}+c_{{2}}{x_{{2}}}^{2}+c_{{3}}x_{{1}}+c_{{4}}x_{{1}}x_{{2}}+c_{{5}}{x_{{1}}}^{2}$.
Now, for $m=1$ we have B\'ezout part $H^2(-5)\simeq S(2)^*\to H^0(1)\simeq S(1)$,
which yields the $5\times 3$ matrix
$$
\left[\begin{array}{ccccc}
\ [142]       & [152]+[234] & [235]& [042]& [052] \\
\ [152]       & [154]+[235] & [354]& [052]& [054] \\
\ [132]+[042] & [052]+[134] & [135]& [041]+[032] &[051]    
\end{array}\right]^T   .
$$
\end{example}

\section{Implementation}  \label{implement}

We have implemented the search for formulae and construction of the
corresponding resultant matrices in {\sc Maple}.  Our code is based on
that of \cite[Sect.8]{DE03} and extends it to the scaled case. We
also introduce new features, including construction of the matrices of
Section~\ref{MatConstr}; hence we deliver a full package for
multihomogeneous resultants, publicly available at
\href{http://www-sop.inria.fr/galaad/amantzaf/soft.html}{\tt
  www-sop.inria.fr/galaad/amantzaf/soft.html}.

Our implementation has three main parts; given data $(\bm l,\bm d,\bm
s)$ it discovers all possible determinantal formula; this part had
been implemented for the unmixed case in \cite{DE03}. Moreover, for a
specific $\bm m-$vector the corresponding resultant complex is
computed and saved in memory in an efficient representation. As a
final step the results of Section~\ref{MatConstr} are being used to
output the resultant matrix coming from this complex. The main
routines of our software are illustrated in Table~\ref{Tmaple}.

\begin{table} \centering
\begin{tabular}{|l|l|}
\hline
routine  & function\\
\hline \hline
{\tt Makesystem} & output polynomials of type  $(\bm l,\bm d,\bm s)$\\
\hline
{\tt mBezout} & compute the m-B\'ezout bound\\
\hline
{\tt allDetVecs} & enumerate all determinantal $\bm m-$vectors \\
\hline
{\tt detboxes} & output the vector boxes of Cor.~\ref{DetBoxes} \\
\hline 
{\tt findSyl} & output Sylvester type vectors (unmixed case) \\
\hline
{\tt findBez} & find all pure B\'ezout-type vectors. \\
\hline 
{\tt MakeComplex} & compute the complex of an $\bm m-$vector \\
\hline
{\tt printBlocks} & print complex as $\oplus_{a} K_{1,a}\to \oplus_{b} K_{0,b} $\\
\hline
{\tt printCohs} & print complex as $\oplus H^q(\bm u)\to \oplus H^q(\bm v) $\\
\hline
{\tt multmap} & construct matrix $M(f_i): S(\bm u)\to S(\bm v)$ \\
\hline
{\tt Sylvmat} & construct Sylv. matrix $ K_{1,p}\to K_{0,p-1}$ \\
\hline
{\tt Bezoutmat} & construct B\'ezout matrix $ K_{1,a}\to K_{0,b}$ \\ 
\hline
{\tt makeMatrix} & construct matrix $K_1\to K_0$\\
\hline
\end{tabular}
\caption{The main routines of our software\label{Tmaple}.}
\end{table}

The computation of all the $\bm m-$vectors can be done by searching
the box defined in Theorem~\ref{mBounds} and using the filter in
Lemma~\ref{nessCond}.  For every candidate, we check whether the terms
$K_2$ and $K_{-1}$ vanish to decide if it is determinantal.

For a vector $\bm m$, the resultant complex can be computed in an
efficient data structure that captures its combinatorial information
and allows us to compute the corresponding matrix.  More specifically,
a nonzero cohomology summand $K_{\nu,p}$ is represented as a list of
pairs $(c_q,e_p)$ where $c_q=\{k_1,\dots,k_t\}\subseteq [1,r]$ such
that $q=\sum_{i=1}^t l_{k_i}=p-\nu$ and $e_p\subseteq [0,n]$ with
$\#e_p=p$ denotes a collection of polynomials (or a basis element in
the exterior algebra). Furthermore, a term $K_\nu$ is a list of
$K_{\nu,p}$'s and a complex a list of terms $K_\nu$.

The construction takes place block by block. We iterate over all
morphisms $\delta_{a,b}$ and after identifying each of them the
corresponding routine constructs a Sylvester or B\'ezout block. Note
that these morphisms are not contained in the representation of the
complex, since they can be retrieved from the terms $K_{1,a}$ and
$K_{0,b}$.

\begin{example}\label{EXmaple}
  We show how our results apply to a concrete example and demonstrate
  the use of the \Maple\ package on it. The system we consider admits
  a standard B\'ezout-Dixon construction of size $6 \times 6$. But its
  determinant is identically zero, due to the sparsity of the
  supports, hence it neither expresses the multihomogeneous resultant,
  nor provides any information on the roots. Instead our method
  constructs a non-singular $4 \times 4$ hybrid matrix.

Let $\bm l=\bm d=(1,1)$ and $\bm s=(1,1,2)$.\\

{\tt
\noindent > read mhomo-scaled.mpl:
\\
\noindent > l:=vector([1,1]): d:=l: s:= vector([1,1,2]):

\noindent > f:= Makesystem(l,d,s);}
\begin{align*}
f_{{0}}&= a_{{0}}+a_{{1}}x_{{1}}+a_{{2}}x_{{2}}+a_{{3}}x_{{1}}x_{{2}}\\
f_{{1}}&= b_{{0}}+b_{{1}}x_{{1}}+b_{{2}}x_{{2}}+b_{{3}}x_{{1}}x_{{2}}\\
f_{{2}}&= c_{{0}}+c_{{1}}x_{{1}}+c_{{2}}x_{{2}}+c_{{3}}x_{{1}}x_{{2}}+
c_{{4}}{x_{{1}}}^{2}+c_{{5}}{x_{{1}}}^{2}x_{{2}}+\\
& \ \ \ \ + c_{{6}}{x_{{2}}}^{2}+c_{{7}}x_{{1}}{x_{{2}}}^{2}+c_{{8}}{x_{{1}}}^{2}{x_{{2}}}^{2}
\end{align*}
We check that this data is determinantal, using Theorem~\ref{Hdeter}:

{\tt
\noindent > has\_deter( l, d, s);

\centerline{$true$}
}

Below we apply a search for all possible determinantal vectors, by examining
all vectors in the boxes of Corollary~\ref{DetBoxes}. The condition used
here is that the dimension of $K_2$ and $K_{-1}$ is zero, which is
both necessary and sufficient.

{\tt
\noindent > allDetVecs( l, d, s) ;

  \centerline{$[[2, 0, 4], [0, 2, 4], [3, 0, 6], [2, 1, 6], [2, -1, 6], [1, 2, 6], [1, 1, 6],$}
  \centerline{$[1, 0, 6], [0, 3, 6], [0, 1, 6], [-1, 2, 6], [3, 1, 8], [1, 3, 8], [1, -1, 8],$}
  \centerline{$ [-1, 1, 8], [3, -1, 10], [-1, 3, 10]]$}
}

The vectors are listed with matrix dimension as third coordinate.
The search returned 17 vectors; the fact that the number of vectors is
odd reveals that there exists a self-dual vector. The critical degree
is $\bm \rho=(2,2)$, thus $\bm m=(1,1)$ yields the self-dual
formula. Since the remaining 16 vectors come in dual pairs, we only
mention one formula for each pair; finally, the first 3 formulae
listed have a symmetric formula, due to the symmetries present to our
data, so it suffices to list $6$ distinct formulae.

Using Theorem~\ref{Hdeter} we can compute directly determinantal boxes:

{\tt
\noindent > detboxes( l, d, s) ;

\centerline{$[[-1, 1], [1, 3]], [[1, 3], [-1, 1]]$}
}

Note that the determinantal vectors are exactly the vectors in these
boxes.  These intersect at $\bm m=(1,1)$ which yields the self-dual
formula.  In this example minimum dimension formulae correspond to the
centers of the intervals, at $\bm m=(2,0)$ and $\bm m=(0,2)$ as noted
in Conj.~\ref{concenters}.

A pure Sylvester matrix comes from the vector

{\tt
\noindent > m:= vector([d[1]*convert(op(s),`+`)-1, -1]);

\centerline{$\bm m=(3,-1)$}
}

We compute the complex:

{\tt
\noindent > K:= makeComplex(l,d,s,m):
\\ 
\noindent > printBlocks(K); printCohs(K);

\centerline{$K_{1,2} \to  K_{0,1}$}

\centerline{$H^1(1, -3)\oplus H^1(0, -4)^2 \to   H^1(2, -2)^2\oplus H^1(1, -3)$}
}

The dual vector $(-1,3)$ yields the same matrix transposed.  The block
type of the matrix is deduced by the first command, whereas {\tt
  printCohs} returns the full description of the complex.  The
dimension is given by the multihomogeneous B\'ezout bound, see Lemma~\ref{mBezout}, which is equal to:

{\tt
\noindent > mbezout( l, d, s) ;

\centerline{$10$}
}

It corresponds to a ``twisted'' Sylvester matrix:

{\tt
\noindent > makematrix(l,d,s,m); 
}

{\small $$
\left[ \begin {array}{cccccccccc} -b_{{1}}&-b_{{3}}&0&a_{{1}}&a_{{3}}
&0&0&0&0&0\\\noalign{\smallskip}-b_{{0}}&-b_{{2}}&0&a_{{0}}&a_{{2}}&0&0&0
&0&0\\\noalign{\smallskip}0&-b_{{1}}&-b_{{3}}&0&a_{{1}}&a_{{3}}&0&0&0&0
\\\noalign{\smallskip}0&-b_{{0}}&-b_{{2}}&0&a_{{0}}&a_{{2}}&0&0&0&0
\\\noalign{\smallskip}-c_{{4}}&-c_{{5}}&-c_{{8}}&0&0&0&a_{{1}}&0&a_{{3}}
&0\\\noalign{\smallskip}-c_{{1}}&-c_{{3}}&-c_{{7}}&0&0&0&a_{{0}}&a_{{1}}
&a_{{2}}&a_{{3}}\\\noalign{\smallskip}-c_{{0}}&-c_{{2}}&-c_{{6}}&0&0&0&0
&a_{{0}}&0&a_{{2}}\\\noalign{\smallskip}0&0&0&-c_{{4}}&-c_{{5}}&-c_{{8}}
&b_{{1}}&0&b_{{3}}&0\\\noalign{\smallskip}0&0&0&-c_{{1}}&-c_{{3}}&-c_{{7
}}&b_{{0}}&b_{{1}}&b_{{2}}&b_{{3}}\\\noalign{\smallskip}0&0&0&-c_{{0}}&-
c_{{2}}&-c_{{6}}&0&b_{{0}}&0&b_{{2}}\end {array} \right] $$
} 

The rest of the matrices are presented in block format; the same
notation is used for both the map and its matrix. The
dimension of these maps depend on $\bm m$, which we omit to write.  Also,
$B(x_k)$ stands for the partial B\'ezoutian with respect to variables $\bm x_k$.

For $\bm m=(3,1)$ we get $K_{1, 1}\oplus K_{1, 2} \to K_{0, 0}$, or
$$H^0(2, 0)^2\oplus H^1(0, -2)^2 \to  H^0(3,1)$$
$$
\left[\begin{array}{c}
M(f_0)\\
M(f_1)\\
B(x_2)
\end{array}\right]
$$
Symmetric is $\bm m=(1,3)$.

For $\bm m=(3,0)$,  $K_{1,2}\to K_{0,0} \oplus K_{0,1}$:
$$ H^1(1, -2) \oplus H^1(0,-3)^2  \to H^0(3,0) \oplus H^1(1,-2)^2$$
$$
\left[\begin{array}{c|c}
\begin{array}{c}
0\\ M(f_0) \\ \ -M(f_1)
\end{array}
& B(x_2)
\end{array}\right]
$$
Symmetric is $\bm m=(0,3)$.

For $\bm m=(2,1)$, we compute $K_{1, 1} \oplus K_{1, 3}\to K_{0,0}$, or
$$ H^1(1,0)^2 \oplus H^2 (-2, -3) \to H^0(2,1)  $$
$$
\left[\begin{array}{c|c}
\begin{array}{c}
M(f_1)\\ M(f_2)
\end{array}
& \Delta_{(0,1),(2,1)}
\end{array}\right]
$$
Symmetric is $\bm m=(1,2)$.

If $\bm m=(1,1)$, $K_{1, 1} \oplus K_{1, 3}\to K_{0, 0} \oplus K_{0, 2}$, yielding
$$ H^0(0,0)^2 \oplus H^2 (-3, -3)  \to  H^0 (1, 1) \oplus H^2 (-2, -2)^2$$
$$
\left[\begin{array}{cc}
\begin{array}{c}
f_0
\\ f_1
\end{array}
& 0
\\
\Delta_{(1,1),(1,1)}
&
\begin{array}{cc}
M(f_0) & -M(f_1)
\end{array}
\end{array}\right]
$$
We write here $f_i$ instead of $M(f_i)$, since this matrix
is just the $1\times 4$ vector of coefficients of $f_i$.

For $\bm m=(2,0)$, we get $K_{1,2}\oplus K_{1,3} \to K_{0,0} \oplus K_{0,1}$, or
$$ H^1 (0, -2) \oplus H^2 (-2, -4) \to H^0(2, 0) \oplus H^1(0, -2) $$
$$
\left[\begin{array}{cc}
B(x_2) & 0\\
\Delta_{(2,0),(0,2)} & B(x_1)
\end{array}\right]
$$
This is the minimum dimension determinantal complex, yielding a
$4\times 4$ matrix.
\end{example}

\begin{ack}\vspace{-.2cm}
We thank Laurent Bus\'e for his help with Ex.~\ref{unmixedExamSylv}.
Both authors are partly supported by the Marie-Curie IT Network SAGA,
[FP7/2007-2013] grant agreement PITN-GA-2008-214584.  
Part of this work was completed by the second author in fulfillment of
the M.Sc.\ degree in the Department of Informatics and
Telecommunications of the University of Athens.
\end{ack}

\end{document}